\documentclass[a4paper,onecolumn,unpublished,allowfontchangeintitle,allowtoday]{quantumarticle}
\pdfoutput=1

\usepackage[utf8]{inputenc}
\usepackage[T1]{fontenc}
\IfFileExists{babel.sty}{\usepackage[english]{babel}}{}
\usepackage{amsmath,amssymb,amsthm}
\usepackage{graphicx}
\usepackage{booktabs}
\usepackage{xcolor}
\usepackage{listings}
\IfFileExists{dsfont.sty}{\usepackage{dsfont}}{}
\IfFileExists{lmodern.sty}
  {\IfFileExists{microtype.sty}{\usepackage{microtype}}{}}{}
\usepackage[colorlinks=true,linkcolor=blue!55!black,citecolor=green!45!black,
            urlcolor=blue!55!black]{hyperref}
\usepackage{enumitem}
\usepackage{caption}
\usepackage{framed}
 
\definecolor{codebg}{gray}{0.97}
\definecolor{codekw}{RGB}{0,90,160}
\definecolor{codestr}{RGB}{160,50,20}
\definecolor{codecom}{RGB}{100,110,100}
 
\lstdefinestyle{py}{
  language=Python,
  backgroundcolor=\color{codebg},
  basicstyle=\ttfamily\small,
  keywordstyle=\color{codekw}\bfseries,
  stringstyle=\color{codestr},
  commentstyle=\color{codecom}\itshape,
  showstringspaces=false,
  breaklines=true,
  frame=single,
  rulecolor=\color{gray!40},
  framesep=5pt,
  xleftmargin=6pt,
  xrightmargin=2pt,
  captionpos=b,
  numbers=none,
  literate={>>}{{$\succeq$}}1,
}
\lstdefinestyle{sh}{style=py,language=bash,keywordstyle=\color{codekw}}
 
\newcommand{\MoMPy}{\textsc{MoMPy}}
\newcommand{\code}[1]{\texttt{#1}}
\ifcsname mathds\endcsname
  \newcommand{\id}{\mathds{1}}
\else
  \newcommand{\id}{\mathbb{1}}
\fi
\newcommand{\Tr}{\operatorname{Tr}}
\newcommand{\CC}{\mathbb{C}}

\newcommand{\Gam}{\Gamma}
\newcommand{\bra}[1]{\langle #1 |}
\newcommand{\ket}[1]{| #1 \rangle}
\newcommand{\braket}[2]{\langle #1 | #2 \rangle}
\newcommand{\expect}[1]{\langle #1 \rangle}
 
\theoremstyle{definition}
\newtheorem{definition}{Definition}
\theoremstyle{plain}
\newtheorem{proposition}{Proposition}

\newcommand{\warnbox}[1]{%
  \begin{center}
  \fcolorbox{red!55!black}{red!4}{%
    \begin{minipage}{0.94\linewidth}\small #1\end{minipage}}
  \end{center}}
 
\newcommand{\normalbox}[1]{%
  \begin{center}
  \fcolorbox{black}{white}{%
    \begin{minipage}{0.94\linewidth}\small #1\end{minipage}}
  \end{center}}
 
\begin{document}
 
\title{\MoMPy: automated construction of moment matrices for semidefinite programming relaxations}

\author{Carles Roch i Carceller}\email{carles.roch@icfo.eu}
\affiliation{ICFO - Institut de Ciencies Fotoniques, The Barcelona Institute of Science and Technology, 08860 Castelldefels, Spain.}
 

\date{\today}

\begin{abstract}
\noindent
 
We present \MoMPy, an open-source Python package that provides a unified, declarative construction of moment matrices for semidefinite programming (SDP) hierarchies. The user declares operator labels together with a small set of structural relations, and \MoMPy{} returns a matrix of SDP variable indices in which every implied identification has already been made, ready for CVXPY or any other modelling layer. Internally, the identification problem is recast as a word-rewriting problem on tuples of integers and solved with a memoised breadth-first closure coupled to a disjoint-set forest, so that each distinct monomial is processed exactly once per build, however many matrix entries it eventually labels.

The central abstraction is independent of the physical scenario: the same construction handles tracial, state (NPA), and block-valued moments, with no notion of parties, settings or preparations. We demonstrate this generality on a tripartite Mermin inequality, the bipartite CHSH inequality, measurement compatibility in a steering scenario, state discrimination and dimension witnessing in prepare-and-measure scenarios, device-independent randomness certification, and certification of deterministic correlations from known ensembles. The construction is validated against an independent brute-force implementation and benchmarked across eight structurally distinct scenarios, spanning bipartite and tripartite Bell tests, heterogeneous-outcome, steering-type, jointly-measurable and network configurations.
 
\medskip
\noindent\textbf{Code:} \url{https://github.com/chalswater/MoMPy} \quad
\textbf{Install:} \code{pip install MoMPy} \quad \textbf{Version:} 1.1.0 
\end{abstract}

\maketitle

\normalbox{\textbf{Declaration of use of AI.} Three years ago I wrote a very basic Python package. It contained some functions designed to build arrays representing moment matrices which I used to solve some research problems with semidefinite programming relaxations. During the passing of time, I have been modifying this package, correcting some bugs and implementing new functionalities. At the stage as it was however, the user interface was barely understandable for anyone that was not me. Given the recent advent of use of large language models for research, I decided to feed my package to the artificial intelligence Claude, by Anthropic, and asked it to fix any bugs it could find, improve the user interface and build a proper Python package out of it. After seeing the dramatic improvement of the package, I decided to make it public and write a tutorial on how to use it together with Claude. This is it. During al this process, I have been directing Claude, making sure it improved my original package correctly, and that the current notes are also complete and understandable. I should have done that myself a long time ago. However, I never found the time and will to do so, and Claude is much more efficient than me.}

\newpage

\tableofcontents
 
\newpage
 
\section{Introduction}
\label{sec:intro}
 
\subsection{Bounding quantum correlations: why relax, and what for}
\label{sec:sdp-background}
 
A great number of questions in quantum information science share a common structure. A figure of merit is specified --- a correlator, a discrimination probability, an entropy, a witness --- that depends on the states and measurements used to generate some observed statistics, and one asks for its extremal value over \emph{all} implementations consistent with a given set of structural assumptions. How large a violation of a Bell inequality can quantum systems produce \cite{bell1964,chsh1969,cirelson1980}? How reliably can a set of quantum states be told apart when only limited information about the preparation is trusted \cite{helstrom1969,Barnett2009,Bae2015}? How much genuine randomness can be certified from statistics alone \cite{acin2016}? How much secret key rate can be transmitted using a qubit system \cite{pawlowski2011}? Once the physical dressing of each of these questions is stripped away, they all pose the same problem in disguise: decide whether a given collection of numbers can arise from \emph{some} choice of operators, subject only to whatever algebraic relations the physical setting imposes. Read this way, a bound on the figure of merit is nothing but this decision problem solved once for every candidate value, or folded directly into the optimisation of a linear objective.
 
Such a decision problem, written literally, is posed over an unbounded and otherwise unspecified operator space: the states and operators are free variables, their number and dimension are not fixed in advance, and the objective is generally a non-linear function of them. As it stands the problem is neither convex nor obviously computable. What makes it tractable in practice is a specific kind of convex relaxation, and it is worth being precise about why that particular tool is the right one.
 
A semidefinite program (SDP) is a convex optimisation of the form
\begin{align}
  \max_{X} & \quad \langle C, X\rangle \label{eq:sdp-intro} \\
  \text{s.t.} &  \quad
  \langle A_i, X\rangle = b_i \ \ \forall i, \nonumber \\
  & \quad X \succeq 0, \nonumber
\end{align}
where $X$ ranges over positive semidefinite matrices, $\langle A,B\rangle = \operatorname{tr}(A^\dagger B)$ is the Hilbert--Schmidt inner product, and $C$, $\{A_i\}$ and $\{b_i\}$ are fixed data \cite{Vandenberghe1996}. The feasible set is the intersection of an affine subspace with the cone of positive semidefinite matrices, so Problem~\eqref{eq:sdp-intro} is convex regardless of the size of $X$: it has no spurious local optima, it comes with a dual program that certifies both upper and lower bounds on the true optimum, and it is solved routinely, at scales of practical interest, by well-understood interior-point methods and modelling languages such as CVXPY \cite{CVXPY2016}. None of this is true of the original decision problem over operators. The entire appeal of casting a quantum information question as an SDP is therefore to trade a hard, non-convex search over an unbounded space of physical implementations for a convex search over a bounded space of numbers --- at the cost of solving, in general, only a relaxation of the original problem.
 
The relaxation strategy that makes this trade possible is due to Lasserre in the commutative setting \cite{Lasserre2001} and to Navascu\'es, Pironio and Ac\'in (NPA) in the noncommutative one \cite{navascues2007,navascues2008,pironio2010convergent}. Rather than optimising over the operators themselves, one optimises over their \emph{moments}: expectation values of products of the operators of interest, arranged into a matrix that every genuine quantum realisation must render positive semidefinite. This turns the original problem into a linear SDP whose optimum upper-bounds the true value, together with a hierarchy of increasingly large such programs that tightens the bound --- and, for many problems of interest, converges to it --- as more moments are included.

The reach of this idea extends well beyond the Bell scenario in which it was first formulated. Beyond bounding Bell-inequality violations and discrimination or guessing probabilities, the same relaxation strategy underlies, among others: witnessing and quantifying entanglement directly from measured correlations \cite{moroder2013}; certifying genuine randomness generated by an uncharacterised device \cite{pironio2010_rng,pironio2013}; bounding the classical or quantum dimension needed to reproduce an observed set of correlations \cite{navascues2015,navascues2015_2} or quantum states \cite{bernal2024}; self-testing a device's states and measurements from its statistics alone, up to the unavoidable freedom of a local isometry \cite{supic2020}; bounding operationally relevant entropic quantities, such as the information available to an eavesdropper in a quantum key distribution protocol \cite{brown2021}; and certifying non-classical correlations in networks of independent sources, where no single shared state accounts for the observed statistics \cite{wolfe2021}. A recent and considerably broader review of this expanding literature is given in Ref.~\cite{tavakoli2024}. Most recently, letting the entries of the moment matrix themselves be operators rather than scalars has been used to accommodate a considerably broader class of structural constraints while remaining a semidefinite program, generalising the NPA hierarchy itself to a \emph{block moment matrix} relaxation \cite{dalessandro2026}. \MoMPy{}'s \code{MomentProblem}, declared with a block size \code{dim} greater than one (Sec.~\ref{sec:block-mm}), builds moment matrices of exactly this block-valued kind; assembling them into the fully general construction of Ref.~\cite{dalessandro2026} is left to the user. What is common to every member of this growing family is not the physics but the bookkeeping: a matrix of moments, quotiented by whichever algebraic identifications the chosen relations imply. That bookkeeping is the subject of the rest of this introduction, and of \MoMPy{} itself.
 
Concretely, the construction proceeds as follows. Choose a list of monomials $\mathcal{L} = \{u_1, u_2, \dots, u_n\}$ built from the operators of interest, and a map $\Theta$ that faithfully represents the physical problem under investigation. Assemble the matrix with elements
\begin{equation}
  \Gam_{ij} = \Theta( u_i u_j^\dagger ) ,
  \label{eq:gamma-intro}
\end{equation}
so that each entry $\Theta( u_i u_j^\dagger ) = \expect{u_i u_j^\dagger}$ is a \emph{moment}, and $\Gam$ is accordingly a \emph{moment matrix}. By construction $\Gam$ is positive semidefinite, and because the objective of interest is linear in the moments, it is linear in the entries of $\Gam$. Maximising a linear functional of $\Gam$ subject to $\Gam \succeq 0$ is precisely an instance of Problem~\eqref{eq:sdp-intro}, and hence an SDP.
 
\subsection{Where the difficulty actually lies}
 
The difficulty is not writing down~\eqref{eq:gamma-intro}: it is that most of its entries are not independent, and asserting these relations in problems with standard size becomes a significant bottleneck.
 
Suppose $R$ is a projector (e.g.\ a rank-one projector representing a pure state), so $R^2 = R$. Then the entries holding $\expect{R}$ and $\expect{RR}$ are the \emph{same} SDP variable, and treating them as distinct discards a constraint the relaxation is entitled to. Suppose $\{M_0, M_1\}$ are the outcomes of a projective measurement, so $M_0 M_1 = 0$. Then every entry whose monomial contains $M_0 M_1$ as a factor is identically zero, and treating it as a free variable discards another. Suppose two operators commute: a whole orbit of reorderings then collapses onto a single variable. All of these instances reduce to what we call \emph{reduction rules}: a set of guidelines that relate distinct moment matrix elements so as to faithfully represent the underlying physical problem.
 
For a small hierarchy these identifications can be worked out by hand. For the sizes at which relaxations become tight enough to be useful, however, tightness is often only achievable with tens of thousands of them. Distinct reduction rules can furthermore interact in the construction of the moment matrix --- a commutation can create an adjacency that permits an idempotent collapse, which can in turn expose an orthogonality and force a zero --- so that the rules cannot simply be applied once and forgotten. Getting them wrong degrades the relaxation silently: too few identifications and the bound remains valid but needlessly loose; too many and it is no longer a valid bound at all.
 
\MoMPy{} exists to make this step automatic, fast and testable.

\subsection{Contribution of this work}
\label{sec:contribution-summary}

Our claim is not that moment matrices can be built by software --- several packages already build them, and we survey them next --- but that the construction admits an abstraction general enough to be shared across hierarchies that are normally implemented separately, and that the abstraction is small enough to be written down, validated, and used interactively. Concretely, this paper makes three contributions.

\begin{enumerate}[leftmargin=*,itemsep=3pt]
\item \textbf{An abstraction.} We isolate the step common to NPA-type, tracial and block-valued hierarchies --- quotienting a matrix of formal words by the equivalence generated by the assumed operator relations --- and show that it is fully specified by four word-level relations (idempotency, orthogonality, commutation, reversal), an optional cyclicity assumption, and a block size (Sec.~\ref{sec:background}). Everything scenario-specific lives outside this core, in \emph{which} relations are declared and in linear constraints applied afterwards. In particular no notion of ``parties'' appears anywhere in the construction.
\item \textbf{An algorithm and its implementation.} We recast the identification problem as a word problem and solve it with a memoised breadth-first closure over a rewriting system, coupled to a disjoint-set forest with a sticky zero class, so that the cost of a build is governed by the number of \emph{distinct words} reachable under the relations rather than by the number of matrix positions (Sec.~\ref{sec:algorithm}). This is what makes hierarchies of a few thousand rows rebuild fast enough to be iterated on while a model is still being designed, and it is validated against an independent brute-force implementation on randomised scenarios (Sec.~\ref{sec:validation}).
\item \textbf{Evidence that the abstraction is the right one.} We build, from one problem class and one set of declared relations, relaxations that are conventionally treated as distinct constructions --- Bell-type and tripartite Bell-type bounds, prepare-and-measure discrimination and dimension witnessing, a device-independent randomness certificate, and a fixed-ensemble classicality certificate whose moments are $d\times d$ operators rather than numbers (Secs.~\ref{sec:tutorial}--\ref{sec:applications}) --- and benchmark the construction across eight structurally distinct configurations (Sec.~\ref{sec:performance}). The block-valued case is the sharpest test: it is a hierarchy introduced only recently~\cite{dalessandro2026}, and it is reached here by changing two flags and a block size, with the declared relations, the reduction rules and the downstream modelling layer all untouched.
\end{enumerate}

What we do \emph{not} claim is raw throughput. A compiled, specialised implementation will build a given standard hierarchy faster, and we say so precisely in Sec.~\ref{sec:related-software}; the contribution offered here is generality of expression together with an auditable and independently validated construction, at a scale and interactivity that covers most working practice.

\subsection{Relation to other software}
\label{sec:related-software}

\MoMPy{} is not the first tool of its kind. Several packages address overlapping ground. \code{Ncpol2sdpa}~\cite{Wittek2015} is the established Python package for noncommutative polynomial optimisation and generates NPA relaxations directly from a symbolic problem specification. \code{Moment}~\cite{Moment2024} is a recent C\texttt{++} toolkit with a MATLAB interface, built specifically for high-performance moment matrix generation and reporting speed-ups of up to four orders of magnitude over comparable software. \code{QuantumNPA.jl}~\cite{QuantumNPA} implements the hierarchy in Julia around an algebra of predefined operator types. \code{Inflation}~\cite{Inflation2023} targets causal-compatibility and network scenarios, and can generate standard NPA relaxations as a special case among these. For the block-matrix hierarchy of Ref.~\cite{dalessandro2026} specifically there is also \code{BlockMatrixHierarchy\_}~\cite{dalessandro2026blockmatrixhierarchy}, which works very well in Julia.
 
These tools differ from each other mainly in how a scenario is \emph{specified} --- a symbolic polynomial problem, a predefined algebra of typed operators, a causal structure, a party decomposition --- and each is, within its own specification language, more capable or faster than what we describe here. \MoMPy{} occupies a deliberately narrow niche within this landscape. It is not a symbolic algebra system and it does not parse polynomial objectives. Its scope is exactly one task: given a list of monomials and a declaration of which structural relations the operators obey, return the matrix of variable indices together with a lookup table. Everything downstream --- objective, additional constraints, solver --- is left to the user and to CVXPY~\cite{CVXPY2016}. Like \code{Moment}, which is designed for use in a toolchain together with CVX or YALMIP and an external SDP solver~\cite{Moment2024}, \MoMPy{} is deliberately one stage of a pipeline rather than an end-to-end solver: \emph{declare} operators and relations (\code{OperatorSet}/\code{Algebra}) $\to$ \emph{build} a relaxation \code{MomentProblem} $\to$ a \code{MomentMatrix} of variable indices and derived linear constraints $\to$ a modelling layer (\code{to\_cvxpy()}, or any other reading \code{.matrix} directly) $\to$ a numerical solver (SCS, Clarabel, MOSEK, \dots). Nothing after the second stage is specific to \MoMPy{}, which is what keeps its own dependency footprint to a single package. Three consequences follow from restricting its scope to the first two stages, and they are the reasons one might prefer it:

\begin{enumerate}[leftmargin=*,itemsep=2pt]
\item \textbf{Reduction rules are declared, not inferred.} Idempotency, orthogonality and commutation are stated explicitly for arbitrary label sets, rather than being tied to predefined operator types or to a fixed notion of ``parties''. Scenarios that do not decompose into independent parties --- prepare-and-measure, informationally restricted correlations, partially commuting families --- are expressed as naturally as Bell scenarios. Sec.~\ref{sec:pam} contains an example in which no commutation relation is declared at all.
\item \textbf{One class spans all three choices of $\Theta$.} Tracial, state (NPA) and block moments are genuinely different relaxations, as Sec.~\ref{sec:tracial-vs-state} explains, and conflating them is a real source of error; here they are two booleans and a block size on the same object rather than three code paths, which is what allows the block-valued application of Sec.~\ref{sec:block-mm} to reuse, verbatim, the monomials and algebra of a scalar one.
\item \textbf{It is pure Python with one dependency.} \MoMPy{} requires only NumPy, installs in seconds, and runs anywhere; there is no compilation step and no proprietary host language. The whole rewriting core is a few hundred lines and is documented argument by argument in App.~\ref{app:api}, so a user can audit the identifications their relaxation depends on rather than trusting them.
\end{enumerate}

These packages also differ in what they specify a scenario \emph{with}: \code{Ncpol2sdpa} takes a symbolic noncommuting polynomial problem, \code{Moment} scenario objects in a MATLAB front-end, \code{QuantumNPA.jl} an algebra of predefined operator types, and \code{Inflation} a causal structure; \MoMPy{} takes label sets together with declared relations, and nothing else.

The claim we want to make precise is about \emph{uniformity} rather than exclusivity, and it is easy to overstate. It would be wrong, for instance, to say that no other package reaches operator-valued moments: \code{Ncpol2sdpa} documents exactly that, through the \code{matrix\_var\_dim} argument of its \code{SteeringHierarchy} class. What its interface does not offer is a way to obtain that capability, tracial moments and ordinary NPA moments from one object: \code{SteeringHierarchy} is a different class from \code{SdpRelaxation}, and neither documents cyclicity as an option. Table~\ref{tab:comparison} records this distinction. Every entry in it is read from the package's own published interface documentation, uniformly and for all packages including \MoMPy{} itself, so that the comparison rests on what each tool offers its users rather than on any experience we happen to have had with one of them.

That last point is worth stating before the table rather than after it, because it is what the table is and is not for. The claim is about the \emph{interface}, not about what is ultimately expressible. We are not asserting that a sufficiently motivated user of another package could not obtain any given row --- with enough bespoke work at the modelling layer, they very likely could, and the last row of the table would then say only that they had to do that work. What we are asserting is that we found no other package in which these constructions are choices of the same underlying object, sharing one declared algebra and one monomial list. That is a design property, it is checkable, and it is the property Sec.~\ref{sec:block-mm} exploits.

\begin{table}[htbp]
\centering
\footnotesize
\setlength{\tabcolsep}{3.2pt}
\begin{tabular}{@{}lccccc@{}}
\toprule
 & \MoMPy{} & \code{Ncpol2sdpa} & \code{Moment} & \code{QuantumNPA.jl} & \code{Inflation} \\
 & (this work) & \cite{Wittek2015} & \cite{Moment2024} & \cite{QuantumNPA} & \cite{Inflation2023} \\
\midrule
State (NPA) moments                  & \checkmark & \checkmark & \checkmark & \checkmark & \checkmark \\
Tracial (cyclic) moments             & \checkmark & $\circ$    & $\circ$    & $\circ$    & $\circ$ \\
Operator-valued (block) moments      & \checkmark & $\ast$     & $\circ$    & $\circ$    & $\circ$ \\
Partial commutation between families & \checkmark & \checkmark & \checkmark & \checkmark & $\circ$ \\
Relations on user-defined label sets & \checkmark & \checkmark & $\circ$    & $\circ$    & $\circ$ \\
\midrule
\emph{All of the above from one object} & \checkmark & --- & --- & --- & --- \\
\bottomrule
\end{tabular}
\caption{Capabilities relevant to the abstraction of Sec.~\ref{sec:contribution-summary}, as documented by each package's own published interface. \checkmark: offered directly by the package's interface. $\ast$: offered, but through a separate class specialised to a different hierarchy (\code{SteeringHierarchy}, via \code{matrix\_var\_dim}). $\circ$: no dedicated interface documented --- this is a statement about the interface, not a claim that the construction is unsupported or inexpressible; a determined user may well be able to encode it by hand, per scenario. ---: outside the package's stated scope. The last row is the actual claim of this paper.}
\label{tab:comparison}
\end{table}

Table~\ref{tab:landscape} previews how this breadth plays out in practice: the same \code{MomentProblem}, combined with different declared relations and flags, covers a range of scenarios that would otherwise each call for bespoke code.
 
\begin{table}[htbp]
\centering
\small
\begin{tabular}{@{}p{10cm}p{2.5cm}l@{}}
\toprule
Problem type & \code{MomentProblem} flags & Worked in \\
\midrule
Bell nonlocality, device-independent certification (NPA, $\Theta=\bra{\psi}\cdot \ket{\psi}$) & \code{cyclicity=False} & Sec.~\ref{sec:chsh} \\
Steering, measurement compatibility (block-matrix, $\Theta=\Tr_A$) & \code{cyclicity=False, hermitian=False, dim=}$d$ & Sec.~\ref{sec:compat} \\
Prepare-and-measure, dimension witnesses (tracial, $\Theta=\Tr$) & \code{cyclicity=True} (default) & Sec.~\ref{sec:pam} \\
Device-independent randomness certification (NPA, $\Theta=\bra{\psi}\cdot \ket{\psi}$) & \code{cyclicity=False} & Sec.~\ref{sec:randomness} \\
Ensembles with deterministic correlations (block-matrix, $\Theta=\id$) & \code{cyclicity=False, hermitian=False, dim=}$d$ & Sec.~\ref{sec:block-mm} \\
\bottomrule
\end{tabular}
\caption{The range of relaxations \MoMPy{} builds, and where each is worked through below. Every row is the same \code{MomentProblem}; what changes is which relations are declared and the \code{cyclicity}/\code{dim} flags it is built with. The last row is the sharpest case: its moments are $d\times d$ operators rather than numbers, and it is nevertheless built from the same monomial list and the same algebra as a scalar relaxation.}
\label{tab:landscape}
\end{table}
 
For very deep hierarchies a compiled tool such as \code{Moment} will most probably remain faster at building a standard hierarchy, and we make no claim to the contrary. The design target here is the regime most working practice inhabits: matrices of a few hundred to a few thousand rows, built and rebuilt interactively while a model is still being decided.

\subsection{Structure of what follows}

Section~\ref{sec:background} sets up moment matrices and the equivalences they must respect, and draws the tracial/state/block distinction on which everything later depends. Section~\ref{sec:algorithm} gives the algorithm, its cost model and its validation against an independent implementation. Section~\ref{sec:tutorial} presents the interface that realises the abstraction, closing with a step-by-step procedure for formulating a problem from scratch, worked through on a tripartite Mermin inequality. Section~\ref{sec:applications} then applies the same construction to five problems end to end. Section~\ref{sec:performance} reports benchmarks across eight structurally distinct configurations, and Sec.~\ref{sec:conclusions} concludes. Appendix~\ref{app:api} documents the complete public interface and App.~\ref{app:legacy} the legacy one.

We have kept the exposition self-contained and worked the examples in full rather than in outline. This is a deliberate choice: the central claim is that a single declarative construction covers scenarios normally implemented separately, and that claim is only convincing if the reader can see each scenario reduced to declarations and check that nothing scenario-specific was smuggled in. Every code listing in this paper is runnable as printed and is part of the package's test suite.
 
\section{Background: moment matrices and their equivalences}
\label{sec:background}
 
\subsection{Words}
 
Any problem under investigation is constructed by a set of operators $\{X_i\}_i$. We identify each operator with a label $X_i \leftrightarrow \ell_i$. Fix a finite set of \emph{operator labels} $\{\ell_i\}_i$. Throughout, labels are positive integers, and for \MoMPy{} the label $0$ is reserved for the identity $\id$ (which is an operator that is always present in any problem). A \emph{word} is a finite tuple of labels, read left to right as an operator product:
\begin{equation}
  w = (\ell_1, \ell_2, \dots, \ell_k) \;\longleftrightarrow\;
  X_{\ell_1} X_{\ell_2} \cdots X_{\ell_k}.
\end{equation}
We assume throughout by default that every generator is Hermitian, $X_\ell^\dagger =
X_\ell$, which is the case for states, POVM elements and observables. The
adjoint of a word is then its reversal,
\begin{equation}
  w^\dagger = (\ell_k, \dots, \ell_2, \ell_1).
  \label{eq:reversal}
\end{equation}
However, not all operators in a problem are necessarily Hermitian (e.g.~unitaries). As we later show, the Hermitian property can be deactivated in \MoMPy{} if desired.
 
\subsection{The moment matrix}
 
Let $\mathcal{L} = (w_1, \dots, w_n)$ be a list of words, the \emph{monomials} generating the relaxation, and let $\Theta(\cdot)$ (or equivalently $\expect{\cdot}$) be a linear \emph{map}
\begin{equation}
  \Theta : \CC^{D \times D} \longrightarrow \CC^{d\times d}
  \label{eq:theta}
\end{equation}
on the operator algebra. Two dimensions appear here and it is worth separating them once and for all: $D$ is the dimension of the Hilbert space the physical operators act on, and is generally unknown and unbounded --- relaxing over it is the entire point --- while $d$ is the size of the \emph{moments} $\Theta$ produces, and is a modelling choice fixed by the user before anything is built (the \code{dim} argument of Sec.~\ref{sec:tutorial}). When $d = 1$, $\Theta$ is a linear functional and each moment is a number, which is the familiar case; when $d>1$ each moment is itself a $d\times d$ operator. We say ``map'' rather than ``functional'' throughout for exactly this reason. Which choice of $\Theta$ is appropriate is made precise in Sec.~\ref{sec:tracial-vs-state}. Adjoining the identity as $w_0 = (0)$, the \emph{moment matrix} is the $d(n{+}1) \times d(n{+}1)$ array
\begin{equation}
  \Gam_{rc} = \Theta( w_r \, w_c^\dagger) = \big\langle\, w_r \, w_c^\dagger \,\big\rangle ,
  \qquad r, c = 0, 1, \dots, n .
  \label{eq:gamma}
\end{equation}
Its defining property is positivity. The hypothesis needed differs between the scalar and the block case, and conflating the two is a genuine (if easily repaired) gap, so we state both.

\begin{proposition}[Positivity]
\label{prop:positivity}
Let $U$ denote the block column vector whose $r$-th entry is $w_r$, so that the operator-valued array $\big[\,w_r w_c^\dagger\,\big]_{rc} = U U^\dagger$ is positive. Then $\Gam \succeq 0$ in either of the following cases:
\begin{enumerate}[leftmargin=1.6em,itemsep=1pt,topsep=2pt]
\item $d = 1$ and $\Theta$ is a positive linear functional;
\item $\Theta$ is completely positive. For a moment matrix of fixed size $n{+}1$, $(n{+}1)$-positivity of $\Theta$ is already sufficient; complete positivity is the convenient map-level condition that covers every $n$ at once.
\end{enumerate}
\end{proposition}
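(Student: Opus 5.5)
For the scalar case I will test positivity against an arbitrary vector and move it inside $\Theta$. Fix $v=(v_0,\dots,v_n)\in\CC^{n+1}$ and form the single operator $P=\sum_r \overline{v_r}\, w_r$, where each word $w_r$ is read as the operator product $X_{\ell_1}\cdots X_{\ell_k}\in\CC^{D\times D}$. Linearity of $\Theta$ gives
\begin{equation*}
  v^\dagger \Gam v=\sum_{r,c}\overline{v_r}\,v_c\,\Theta(w_r w_c^\dagger)=\Theta\Big(\big(\textstyle\sum_r \overline{v_r} w_r\big)\big(\textstyle\sum_c \overline{v_c} w_c\big)^\dagger\Big)=\Theta(PP^\dagger).
\end{equation*}
Since $PP^\dagger\succeq0$ and $\Theta$ is positive, this is $\ge0$. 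It is also real, so $\Gam$ is Hermitian as well; alternatively, Hermiticity follows from the fact that positive functionals are $\dagger$-preserving. Hence $\Gam\succeq0$. The only thing to check is the orientation of the conjugation, which matches the convention $\Gam_{rc}=\Theta(w_rw_c^\dagger)$ of Eq.~\eqref{eq:gamma}.

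For the block case the same trick fails, because a scalar test vector no longer collapses the $d(n{+}1)$-dimensional quadratic form to a single application of $\Theta$. Instead I will use the factorisation stated in the proposition. Let $U$ be the $(n{+}1)D\times D$ block column with entries $w_r$. The block operator array $M=[w_rw_c^\dagger]_{rc}=UU^\dagger$ is then a positive element of $M_{n+1}(\CC^{D\times D})$. By definition of $(n{+}1)$-positivity, the ampliation $\mathrm{id}_{n+1}\otimes\Theta$ maps positive elements of $M_{n+1}(\CC^{D\times D})$ to positive elements of $M_{n+1}(\CC^{d\times d})\cong\CC^{d(n+1)\times d(n+1)}$. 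Applying it entrywise to $M$ gives exactly $[\Theta(w_rw_c^\dagger)]_{rc}=\Gam$, so $\Gam\succeq0$. Complete positivity means $k$-positivity for every $k$, so in particular for $k=n{+}1$, which gives the uniform statement.

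The main obstacle is bookkeeping rather than analysis. One must confirm that the block moment matrix, arranged with its $d\times d$ blocks indexed by $(r,c)$, is literally the ampliation applied to $UU^\dagger$ and not some reshuffling of tensor factors. Fixing the identification $M_{n+1}(\CC^{d\times d})\cong M_{n+1}\otimes M_d$ handles this. It is also worth remarking, to justify stating both cases, that mere positivity of $\Theta$ does not suffice when $d>1$. The transpose map is the standard counterexample: take $d=D=2$, $\Theta=$ transpose, and words realising matrix units, so that $M$ is the unnormalised maximally entangled projector. Then $\Gam$ is the swap operator, which has a negative eigenvalue.
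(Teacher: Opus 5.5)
Your proof is correct and is essentially the paper's own. In the scalar case you collapse the quadratic form to $\Theta(PP^\dagger)$ with $P=\sum_r \overline{v_r}\,w_r$, and in the block case you identify $\Gam$ with $(\mathrm{id}_{n+1}\otimes\Theta)(UU^\dagger)$ and invoke $(n{+}1)$-positivity. Your two extras are correct additions that the paper does not supply: Hermiticity of $\Gam$ from the realness of $v^\dagger\Gam v$, which the paper leaves implicit, and the transpose counterexample for $d>1$, where the paper only asserts that positivity of $\Theta$ alone is insufficient.
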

\begin{proof}
For (i), take $z \in \CC^{n+1}$ and put $Z = \sum_r \bar z_r w_r$. Then
$z^\dagger \Gam z = \sum_{r,c} \bar z_r z_c\, \Theta(w_r w_c^\dagger) = \Theta\big(Z Z^\dagger\big) \geq 0$,
since $Z Z^\dagger$ is a positive operator and $\Theta$ is positive. For (ii), note that $\Gam$ is precisely the image of $UU^\dagger$ under the amplification $\mathrm{id}_{n+1} \otimes \Theta$, which acts entrywise on an $(n{+}1)\times(n{+}1)$ array of operators. Positivity of $\Theta$ alone does not guarantee that such an amplification preserves positivity; complete positivity does, and then $\Gam = (\mathrm{id}_{n+1}\otimes\Theta)(UU^\dagger) \succeq 0$.
\end{proof}

This is the whole content of the relaxation: any physical realisation produces a $\Gam$ that is positive semidefinite and that respects the relations of Sec.~\ref{sec:relations}, so optimising a linear functional of the entries of $\Gam$ over all such $\Gam$ bounds the physical optimum. Adding longer words to $\mathcal{L}$ enlarges $\Gam$ and can only tighten the bound we are looking for.
 
\subsection{Tracial versus state moments versus blocks}
\label{sec:tracial-vs-state}
 
There are many choices for the map $\Theta(\cdot)$ of Eq.~\eqref{eq:theta}, but three are the most natural and currently widely used ones, and \emph{they result in different hierarchies}. This point deserves emphasis because choosing wrongly does not produce an obviously broken answer; it produces a plausible number that is not a valid bound. The first two have $d=1$ and are ordinary positive linear functionals; the third has $d>1$ and is where the word ``functional'' would stop being accurate.
 
\paragraph{State moments.}
Take $\Theta(X) = \bra{\psi} X \ket{\psi}$ for some fixed state $\ket\psi$. This is the choice underlying the NPA hierarchy~\cite{navascues2007,navascues2008}, and $\Gam_{rc} = \bra{\psi} w_r w_c^\dagger \ket{\psi}$ is a Gram matrix of the vectors $w_r^\dagger\ket\psi$. Normalisation gives $\Theta(\id) = 1$.
 
\paragraph{Tracial moments.}
Take $\Theta(X) = \Tr(X)$. This is the choice appropriate when the figure of merit is genuinely a trace of an operator product (e.g.~a quantum correlation). The archetype is a prepare-and-measure scenario~\cite{navascues2015,navascues2015_2}, where the Born rule reads $p(b|x,y) = \Tr(\rho_x M_{b|y})$: here $\rho_x$ is itself one of the operators in the algebra, and the observable quantity is a trace, not an expectation value in an external state. Note that $\Theta(\id) = \Tr(\id) = D$ is the Hilbert-space dimension of Eq.~\eqref{eq:theta}, \emph{not} $1$, which can be left as a free variable or fixed to a desired value.
 
\paragraph{Block moments.}
Take the trivial map choice $\Theta(X) = X$. This is the simplest choice underlying the block-moment hierarchy~\cite{dalessandro2026}, which leads to $\Gam_{rc} = w_r w_c^\dagger$. Normalisation gives $\Theta(\id) = \id$. \\
 
\textbf{The consequence.}
A trace is invariant under cyclic permutations of the words, $\Tr(uv) = \Tr(vu)$. A block-moment is trivially not, and neither is a vector-state expectation value, since the state is not a word itself. So in the tracial case every word may be freely rotated,
\begin{equation}
  \Tr\big( (\ell_1, \ell_2, \dots, \ell_k) \big)
  = \Tr\big( (\ell_2, \dots, \ell_k, \ell_1) \big),
  \label{eq:cyclicity}
\end{equation}
whereas in the state case it may not. Cyclicity is a powerful relation --- it collapses many entries at once --- which makes it tempting to switch on. But imposing it where it does not hold identifies moments that are genuinely different, over-constrains the program, and can drive the optimum \emph{below} the true quantum value.
 
This is not hypothetical. For CHSH with the monomial list $\{A_{a|x},\, B_{b|y},\, A_{a|x}B_{b|y}\}$, the two choices give
\begin{equation}
  \text{state moments:} \;\; 2\sqrt{2} \approx 2.8284,
  \qquad
  \text{tracial moments:} \;\; 2.0000 .
\end{equation}
The tracial answer sits at the local bound, below Tsirelson's value, so it is not an upper bound on the quantum value at all. At level~$1$, with no products in the list, both give $2\sqrt 2$; the discrepancy only appears once words of length four occur in $\Gam$, which is precisely when cyclicity starts to bite, imposing undesired commutation relations $\left[A_{a|x},A_{a'|x'}\right]=0$ which lead to locality. This makes the error easy to miss.
 
\warnbox{\textbf{Rule of thumb.} If the quantity you are relaxing is written $\bra{\psi}\cdots\ket{\psi}$ --- Bell scenarios, NPA, device-independent certification --- build with \code{cyclicity=False}. If it is written $\Tr(\cdots)$ with the state \emph{inside} the operator algebra --- prepare-and-measure, dimension witnesses --- use \code{cyclicity=True} (the default). If the only thing you are unsure about is whether cyclicity should be imposed, \code{cyclicity=False} is the conservative choice: with everything else held fixed it imposes fewer relations, and so gives the less constrained relaxation, which cannot invalidate an otherwise correctly formulated upper bound. It is not a repair for a relaxation that is wrong for some other reason.}
 
\subsection{The operator relations}
\label{sec:relations}
 
\MoMPy{} supports four relations between the monomials, which lead to the reduction rules. Together these relations cover a broad and important class of NPA-type and prepare-and-measure constructions, chosen because each remains decidable by local rewriting rather than for completeness against every hierarchy surveyed in Sec.~\ref{sec:intro}.
 
\begin{description}[leftmargin=1.4em,style=nextline]
\item[Idempotency.] A label $P$ with $P^2 = P$, i.e.\ a projector of any rank. Adjacent repeats collapse: $(\dots, P, P, \dots) \to (\dots, P, \dots)$. This covers pure states $\rho = \ket\phi\bra\phi$ (additionally rank-one, though \MoMPy{} enforces only $P^2=P$ and never a rank) and the elements of a projective measurement, which need not be rank-one at all.
\item[Orthogonality.] A set $\{P_0, \dots, P_m\}$ with $P_i P_j = 0$ for $i \neq j$. Any word containing two distinct members of one such set adjacently is identically zero. This is what makes a measurement projective rather than merely a POVM.
\item[Commutation.] Label sets $A$, $B$ with $[a, b] = 0$ for all $a \in A$, $b \in B$. Adjacent pairs may be transposed. This encodes distinct Hilbert spaces in Bell scenarios (Alice's operators commute with Bob's), compatibility of measurements, and classicality when everything commutes.
\item[Reality (reversal).] For any $\ast$-preserving linear $\Theta$ --- which all three maps of Sec.~\ref{sec:tracial-vs-state} are --- and any word,
\begin{equation}
  \Theta(w^\dagger) \;=\; \Theta(w)^\dagger ,
  \label{eq:star-preserving}
\end{equation}
that is, the moment of an adjoint is the adjoint of the moment, regardless of any assumption on the generators. For scalar moments ($d=1$) the adjoint on the right-hand side is ordinary complex conjugation, $\Theta(w^\dagger) = \overline{\Theta(w)}$; for block moments it is the Hermitian adjoint of the $d\times d$ block. Because every generator is assumed Hermitian (Sec.~\ref{sec:background}), $w^\dagger$ is exactly the word's reversal, Eq.~\eqref{eq:reversal}. Identifying a word with its reversal therefore identifies $\expect{w}$ with $\expect{w}^\dagger$, which in the scalar case means identifying a moment with its complex conjugate and so restricting to $\mathrm{Re}\expect{w}$. This restriction to real symmetric $\Gam$ is standard and loses no generality, because $\Gam \succeq 0$ implies $(\Gam + \Gam^{\mathsf T})/2 \succeq 0$. It is controlled by the \code{hermitian} flag and is on by default --- the name refers to this reversal identification, not to individual moment variables being Hermitian operators. Setting it builds a relaxation consistent with that assumption; it does not verify that the underlying physical generators actually satisfy it.
\end{description}
 
These four are not exhaustive of every structural assumption \MoMPy{} lets a user encode --- only of the relations that act on individual words. A relation such as anticommutation, $\{A,B\}=0$, has no dedicated rewriting rule and cannot be declared on the algebra directly; where it is physically known to hold, it must instead be encoded as a linear constraint between already-distinct moments (Sec.~\ref{sec:constraints}) rather than as a word identification --- and only for those surrounding words that the chosen relaxation actually includes, since a constraint can only relate moments the built matrix already contains. That is the second, independent mechanism \MoMPy{} offers: a relation that identifies two \emph{words} --- $P^2=P$, $AB=BA$, $P_iP_j=0$ --- is declared here, in the algebra, before a single SDP variable exists; a relation between the \emph{numbers} that already-distinct moments must satisfy once built --- $\sum_b M_{b|y}=\id$, or an observed correlation fixed to a numeric value --- is supplied afterwards, as a constraint on the built matrix. Confusing the two is the most common way to get stuck formulating a new problem: if the assumption says two words are the same operator (or a product vanishes), it belongs in the \code{Algebra} below; if it instead relates what different, already-distinct words evaluate to, it belongs downstream, only after \code{MomentProblem.build()} has produced variables to relate.
 
Together with cyclicity~\eqref{eq:cyclicity} in the tracial case, these generate an equivalence relation $\sim$ on words. The SDP variables are the equivalence classes.
 
\begin{definition}[The variable assignment]
Let $\sim$ be the equivalence relation generated by the applicable relations,
and let $\mathcal{Z}$ be the class of words that are forced to vanish. The
moment matrix of variable indices is
\begin{equation}
  M_{rc} = \mathcal{C}\big( [\, w_r w_c^\dagger \,]_\sim \big),
\end{equation}
where $\mathcal{C}$ numbers the classes, and \MoMPy{} assigns $\mathcal{Z}$ the last index
by convention. Two entries carry the same index precisely when they are the
same SDP variable.
\end{definition}
 
\textbf{Zeros propagate}: if any word in a class vanishes, every word in that class
vanishes, since they are all equal.
 
\section{Algorithm and implementation}
\label{sec:algorithm}
 
\subsection{Rewriting}
 
Words are represented as tuples of machine integers: hashable, comparable and cheap to slice. \MoMPy{} runs over all words created by the choice of monomials, and perform local rewriting moves to find the relations of Sec.~\ref{sec:relations}. These moves are, for a word $w$ of length $k$:
 
\begin{center}
\begin{tabular}{@{}lll@{}}
\toprule
move & condition & result \\
\midrule
rotate   & tracial mode & $w \mapsto (w_2, \dots, w_k, w_1)$ \\
reverse  & hermitian mode & $w \mapsto w^\dagger$ \\
collapse & $w_i = w_{i+1}$ idempotent & delete one copy \\
swap     & $(w_i, w_{i+1})$ commuting & transpose them \\
zero     & $w_i \neq w_{i+1}$, same orthogonal set & $w \equiv 0$ \\
\bottomrule
\end{tabular}
\end{center}
 
\textit{Two implementation notes}. First, in tracial mode (\code{cyclicity=True}) the pair $(w_k, w_1)$ is also adjacent; rather than enumerate wrap-around moves separately, a single rotation is performed, which exposes the wrapped pair as an ordinary adjacent pair and keeps the branching factor low without changing the reachable set. Second, collapses shorten words and are never applied in reverse, so the search space is finite.
 
\subsection{Closure by memoised search and union-find}
 
The equivalence class of a word is its orbit under these moves. Computing orbits independently for each of the $O(n^2)$ matrix entries is what makes a naive implementation quadratic-to-cubic; the same orbits are recomputed over and over.
 
\MoMPy{} instead maintains a global dictionary from words to class
identifiers, and a disjoint-set forest over those identifiers. Classifying a word proceeds by breadth-first search over the moves, with one crucial short-circuit:
 
\begin{quote}
\emph{When the search reaches a word that has already been classified, it stops expanding there and merely records that the two classes must be merged.}
\end{quote}
 
This is sound because a word only enters the dictionary once its own neighbourhood has been fully explored, so its class is already closed under the moves; anything reachable through it is therefore already accounted for. The merge is a union-find operation, so a late discovery that two long-separate classes coincide costs essentially nothing.
 
The class of zeros is handled as a \emph{sticky root}: any union involving it keeps it as the representative. Zeros discovered late therefore propagate backwards through classes that were assigned earlier, which a one-pass scheme cannot do.
 
\begin{proposition}[Work bound]
Over an entire build, each distinct word is expanded exactly once, regardless of how many matrix entries reference it. The total work is $O\!\left(W \cdot k \cdot \delta \cdot \alpha(W)\right)$, where $W$ is the number of distinct words reachable from the matrix entries, $k$ the maximum word length, $\delta$ the branching factor, and $\alpha$ the inverse Ackermann function \cite{TarjanUF}.
\end{proposition}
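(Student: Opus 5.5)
The plan is an amortised accounting argument over the global state: the word-to-class dictionary $\mathcal{D}$ and the disjoint-set forest. First I will establish the invariant that makes the short-circuit sound. \emph{A word enters $\mathcal{D}$ only after all of its move-neighbours have been generated and either enqueued or recognised as already classified.} This is proved by induction on the number of breadth-first searches performed. When a search pops a word $u$, it generates the at most $\delta$ images of $u$ under the moves of the table: rotate, reverse, and one collapse, swap or zero per adjacent pair. Each image is then treated in one of two ways:
\begin{enumerate}[leftmargin=1.6em,itemsep=1pt,topsep=2pt]
\item if it already lies in $\mathcal{D}$, the search performs a union and does not expand it;
\item otherwise it is enqueued, and later expanded by the same search.
\end{enumerate}
The invariant then gives the first sentence of the proposition directly. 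A word is expanded only when it is popped. It is popped only if it was not in $\mathcal{D}$ when generated, and it is inserted into $\mathcal{D}$ no later than the end of that search. Hence no later search, whichever matrix entry $w_r w_c^\dagger$ triggered it, can expand it again. Within a single search, a visited set (the partial class being built) prevents double expansion. This is the step that makes the count independent of $n^2$.

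For the cost bound, I charge every operation to the word being expanded. Expanding $u$ consists of three parts:
\begin{enumerate}[leftmargin=1.6em,itemsep=1pt,topsep=2pt]
\item producing at most $\delta$ neighbours, each a new tuple of length at most $k$, so $O(k)$ to build and hash;
\item one dictionary lookup per neighbour, expected $O(k)$ for hashing a tuple of length at most $k$;
\item at most one union/find per neighbour, $O(\alpha(W))$ amortised by Tarjan's analysis \cite{TarjanUF}.
\end{enumerate}
I need to check that the sticky zero root does not break the amortisation. Forcing the zero class to be the representative is a fixed tie-break rather than union-by-rank. I would handle it by keeping rank and path compression internally and storing a separate "is zero" flag on each root, propagated on union. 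This costs $O(1)$ extra per union and preserves the $\alpha$ bound.

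I also need every word touched to be among the $W$ counted. Words reachable from matrix entries by moves are exactly the reachable set. Because collapses only shorten words and no move lengthens them, every word has length at most $k$ and the reachable set is finite, as noted in the implementation remarks. Summing $O(\delta(k+\alpha(W)))$ over $W$ expansions, plus the $O(n^2)$ initial lookups absorbed in $W$, gives $O(W k\delta\alpha(W))$. This is looser than the sum but matches the stated form. Final class numbering is one find per word.

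The main obstacle is making the invariant rigorous in the presence of the short-circuit. A search that stops at an already-classified word $v$ must still produce the correct equivalence class. This holds because $v$'s class is closed under the moves, and since every move has an inverse within the relation or is a zero move, the generated equivalence is symmetric. Reverse, rotate and swap are invertible. Collapse is not applied in reverse, however, so I would argue that $\sim$ is the symmetric closure of the directed move graph. Union-find computes connected components of the underlying undirected graph, and every edge of that graph is examined exactly once, from its source endpoint at that endpoint's unique expansion. That single-expansion-per-source fact is the crux, and it supports both the correctness and the work bound.
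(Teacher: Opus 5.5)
Your proposal is correct and follows the paper's own justification, made more rigorous: a global memo dictionary, a breadth-first search that stops at already-classified words (sound because a word is recorded only after its neighbourhood is explored, which you turn into an explicit induction invariant), all work charged to the single expansion of each word, and Tarjan's bound for the merges. Two small points. First, the $n^2$ entry lookups are not literally absorbed in $W$; but the split point fixes $(r,c)$, so each word is the entry $w_rw_c^\dagger$ at no more than $k+1$ positions, and the lookups add $O(Wk^2)$, which fits the stated form once $\delta$ counts the $\Omega(k)$ candidate move sites per word. Second, your separate zero flag is unnecessary: giving the zero root a virtual rank above every attainable rank makes the paper's rule that the zero class always wins an ordinary union-by-rank choice, so the standard analysis covers the algorithm as stated rather than a modified one.
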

 
Lookup in the finished table is a single hash probe. When the reality relation is active, $w_c w_r^\dagger = (w_r w_c^\dagger)^\dagger$, so the upper triangle determines the whole matrix and only half the entries need classifying.
 
\subsection{Validation}
\label{sec:validation}
 
Because the output of this kind of code is a large integer matrix, errors are
easy to make and hard to notice. \MoMPy{} is therefore tested against
independent oracles rather than against itself.
 
\begin{enumerate}[leftmargin=*,itemsep=2pt]
\item \textbf{Differential testing against brute force.} A deliberately naive reference implementation computes the full transitive closure with plain sets and no optimisation whatsoever. Over $720$ randomised scenarios --- random label sets, random idempotents, random orthogonal families, random commuting pairs, random monomials, in tracial and state modes with and without the reality relation --- the partitions of matrix positions produced by the two implementations agree exactly.
\item \textbf{Numerical realisability.} Explicit matrices are substituted for the operator labels: tensor-product projective measurements, random pure states, simultaneously diagonalisable projector families. Every monomial assigned to a given class must then have the same numerical moment, and the zero class must vanish. This catches classes that are too \emph{coarse}, which an SDP bound alone would silently absorb.
\item \textbf{Known optima.} Full SDPs are solved and compared against closed-form values: $2\sqrt2$ for CHSH, the local bound $2$ for a commutative algebra, unit success probability for unrestricted state discrimination.
\item \textbf{Structural invariants.} Classes are disjoint and cover every word appearing in the matrix; the matrix agrees with the lookup table at every position; indices are contiguous; the zero class is last.
\end{enumerate}
 
The suite comprises $767$ tests and runs in under thirty seconds.
 
\section{The interface}
\label{sec:tutorial}

This section presents the interface that realises the abstraction of Sec.~\ref{sec:background}, in the order a user meets it: declaring operators and their relations, choosing monomials, building, inspecting the result, adding the constraints the matrix cannot express, and handing the whole thing to a solver. It closes with two syntheses --- a cookbook mapping physical assumptions onto declarations (Sec.~\ref{sec:cookbook}), and an explicit procedure for formulating a relaxation that has no template in the literature, worked end to end on a tripartite Mermin inequality (Sec.~\ref{sec:recipe}). Readers already familiar with moment-matrix software may prefer to read Sec.~\ref{sec:cookbook} and Sec.~\ref{sec:recipe} first and refer back.
 
\subsection{Installation}
 
\MoMPy{} is indexed within the \code{PyPi} package listing: \url{https://pypi.org/project/MoMPy}. Its installation can therefore be done through the following terminal commands.
 
\begin{lstlisting}[style=sh]
pip install MoMPy          # core; requires only numpy
pip install MoMPy[cvxpy]   # plus the CVXPY helpers used below
\end{lstlisting}
 
\subsection{Declaring operators and operator relations}
 
Operators are integer labels. \code{OperatorSet} allocates them and records their properties, so no manual counter is needed. Label $0$ is reserved for the identity and is added to the matrix automatically. Every allocator below accepts \code{idempotent} and, where it applies, \code{orthogonal} as keyword flags, declared at allocation time rather than after the fact: \code{idempotent=True} marks a label as a projector of any rank, $P^2=P$ (a pure state $\rho=\ket\phi\bra\phi$, which happens additionally to be rank-one, or a general projective-measurement element, which need not be); \code{orthogonal=True} marks a whole family as pairwise annihilating, $P_iP_j=0$ for $i\neq j$ (the property that makes a measurement projective rather than a general POVM). Both are two of the four relations of Sec.~\ref{sec:relations}, which gives the full physical and mathematical picture; here the concern is only how to declare them.
 
\begin{description}[leftmargin=1.4em,style=nextline]
\item[\code{ops.add(*, idempotent=False)}]
  Allocate and return a single new operator label --- a one-off operator, e.g.\ a spectator degree of freedom with no family structure.
\item[\code{ops.add\_family(count, *, idempotent=False)}]
  Allocate \code{count} independent labels in one call, e.g.\ a family of states $\rho_0,\dots,\rho_{n-1}$; returns a plain \code{list}.
\item[\code{ops.add\_povm(n\_outcomes, *, idempotent=True, orthogonal=True)}]
  Allocate one measurement's \code{n\_outcomes} labels. By default they are registered as \emph{both} idempotent and mutually orthogonal --- the usual projective-measurement assumption; pass \code{idempotent=False} and/or \code{orthogonal=False} for a general POVM, whose outcomes need not be projectors or pairwise orthogonal.
\item[\code{ops.add\_povm\_family(n\_settings, n\_outcomes, **kwargs)}]
  Allocate \code{n\_settings} independent measurements via \code{add\_povm(n\_outcomes, **kwargs)} each; returns a nested list \code{M[setting][outcome]}.
\item[\code{ops.add\_tensor(*shape, idempotent=False)}]
  Allocate a nested list of labels of arbitrary shape, one fresh label per entry and no relation beyond \code{idempotent} imposed between them. E.g.\ \code{ops.add\_tensor(4, 2, 7)} returns a $4\times2\times7$ nested list \code{M[a][b][c]}, with $a$, $b$ and $c$ respectively ranging over $4$, $2$ and $7$ values.
\end{description}
 
After adding the operators relevant to the problem, operator relations which dictate the main reduction rules can be declared with the following commands.
 
\begin{description}[leftmargin=1.4em,style=nextline]
\item[\code{ops.declare\_idempotent(labels)}]
  Mark every label in the iterable \code{labels} as idempotent, $P^2=P$, after allocation --- equivalent to having passed \code{idempotent=True} to the allocator that created them, but usable on labels already on hand (e.g.\ collected from several allocator calls, or received from elsewhere).
\item[\code{ops.declare\_orthogonal(labels)}]
  Mark \code{labels} as pairwise orthogonal, $P_iP_j=0$ for $i\neq j$, after allocation --- equivalent to \code{orthogonal=True} at allocation time. Orthogonality does \emph{not} imply idempotency by itself: a family can be declared orthogonal without being projective, so call \code{declare\_idempotent} as well if both properties hold.
\item[\code{ops.declare\_commuting(a, b)}]
  Declare that every label in \code{a} commutes with every label in \code{b}, $AB=BA$ for all $A\in a$, $B\in b$. This is the one relation with no allocation-time keyword, since it is a statement about \emph{two} families rather than a property of one; passing the same list twice, \code{declare\_commuting(a, a)}, declares that a family commutes internally.
\end{description}
 
Once all operators and their relations are declared, \MoMPy{} creates an algebra through the following command.
 
\begin{description}[leftmargin=1.4em,style=nextline]
\item[\code{ops.algebra()}]
  Bundle every declaration made so far --- whichever of the above ways it was made --- into an \code{Algebra}, the immutable object a \code{MomentProblem} consumes.
\item[\code{Algebra(idempotents=(), orthogonal\_sets=(), commuting\_pairs=())}]
  Build the same object directly, without going through \code{OperatorSet}: \code{idempotents} is a flat iterable of labels; \code{orthogonal\_sets} is an iterable of label groups, each pairwise orthogonal; \code{commuting\_pairs} is an iterable of \code{(a, b)} label-collection pairs, as in \code{declare\_commuting}. Useful when the labels come from elsewhere and no \code{OperatorSet} is being built at all.
\end{description}
 
An example combining several of these:
 
\begin{lstlisting}
from MoMPy import OperatorSet
 
ops = OperatorSet()
R = ops.add_family(3, idempotent=True)   # three pure states R[0..2]
M = ops.add_povm_family(2, 2)            # M[y][b]: two binary measurements
ops.declare_commuting(R, R)              # the states commute with each other
 
algebra = ops.algebra()
\end{lstlisting}
 
Relations can also be declared \emph{after} allocation, for labels already on hand: \code{declare\_idempotent(\allowbreak labels)} and \code{declare\_orthogonal(\allowbreak labels)} mirror the \code{idempotent}/\code{orthogonal} flags above but act on an existing iterable of labels, and \code{declare\_commuting(A, B)} means every label in \code{A} commutes with every label in \code{B} --- the third relation of Sec.~\ref{sec:relations}, and the one with no allocation-time flag, since commutation is a statement about \emph{two} families rather than a property of one. Passing the same list twice, \code{declare\_commuting(A, A)}, declares a family that commutes internally. \code{ops.algebra()} bundles every declaration made so far, however it was made, into an \code{Algebra}.
 
The relations may also be written out directly, which is convenient when labels come from elsewhere:
 
\begin{lstlisting}
from MoMPy import Algebra
 
algebra = Algebra(idempotents=[1, 2, 3],
                  orthogonal_sets=[[4, 5], [6, 7]],
                  commuting_pairs=[([1, 2, 3], [1, 2, 3])])
\end{lstlisting}
 
\subsection{Choosing monomials}
 
The hierarchy level is simply which monomials are included. A monomial is a bare label or a list of labels read as a product.
 
\begin{lstlisting}
flat_M   = [m for row in M for m in row]
monomials  = list(R) + flat_M                              # first order
monomials += [[r, m] for r in R for m in flat_M]           # second order
monomials += [[a, b, c] for a in R for b in R for c in R]  # some third
\end{lstlisting}
 
For the standard ``all words up to length $k$'' there is a shortcut through \code{generate\_monomials(\allowbreak monomial-list,\allowbreak\ level=k)}. For example:
 
\begin{lstlisting}
from MoMPy import generate_monomials
monomials = generate_monomials(list(R) + flat_M, level=2)
\end{lstlisting}
 
\subsection{Building}
 
Once all monomials are collected, and the reduction rules are announced, \MoMPy{} is ready to build the moment matrix with the following command.
 
\begin{description}[leftmargin=1.4em,style=nextline]
\item[\code{MomentProblem(monomials, algebra, *, dim, \dots, hermitian=True)}]
  The fourth relation, reality, is not part of \code{Algebra} at all: it is a property of the problem being relaxed, not of any operator family, and is set once via the \code{hermitian} flag when the \code{MomentProblem} itself is built. \code{hermitian=True} identifies a word with its reversal, i.e.\ imposes that the moment matrix is real symmetric; set it to \code{False} only for a genuinely complex-Hermitian SDP.
\end{description}
 
\begin{lstlisting}
from MoMPy import MomentProblem        # tracial:  Tr(u v')
mm = MomentProblem(monomials, algebra, dim=1).build(progress=True)
print(mm.summary())
\end{lstlisting}
 
With \code{print(mm.summary())}, the user can inspect a summary behind the process of building the moment matrix.
 
\begin{lstlisting}[style=sh]
MomentMatrix: 47 x 47 (46 monomials + identity)
  block size (dim)   : 1
  SDP variables      : 69
  compression        : 2209 entries -> 69 variables (32.0x)
  zero entries       : 64
  distinct words seen: 4017
  build time         : 0.018 s
\end{lstlisting}
 
For state moments --- Bell scenarios, NPA --- pass \code{cyclicity=False}, which is identical in every other respect but does not impose cyclicity:
 
\begin{lstlisting}
mm = MomentProblem(monomials, algebra, dim=1, cyclicity=False).build()
\end{lstlisting}
 
For block-moments, select \code{cyclicity=False} and \code{hermitian=False}, as not all blocks are neither hermitian nor equivalent under cyclic permutations. With \code{dim} also select the wanted block dimension $d$.
 
\begin{lstlisting}
bm = MomentProblem(monomials, algebra, dim=d, cyclicity=False, hermitian=False).build()
\end{lstlisting}
 
\subsection{Inspecting the result}
 
Once the moment matrix of variable indices is built, \MoMPy{} exposes everything about it needed to set up an SDP or to debug a hierarchy under construction, listed below.
 
\begin{lstlisting}
mm.matrix                # (n, n) integer array of variable indices
mm.n                     # matrix side length (monomials + identity)
mm.shape                 # (n, n)
 
mm[[R[0], M[1][0]]]      # the variable holding <R0 M10>; same as mm.index_of(...)
mm.get([R[0], M[1][0]])  # ... or None (or a chosen default) if it never occurs
[R[0], M[1][0]] in mm    # whether the monomial occurs at all
mm[0, 3]                 # matrix entry (r, c) directly --- same as mm.matrix[0, 3]
 
mm.identity_index        # the variable holding <1>
mm.zero_index            # the class forced to vanish
mm.n_variables           # how many distinct variables there are
mm.variable_indices      # the sorted array of indices that actually occur
mm.has_zeros             # whether any entry was forced to vanish
 
mm.equivalents([R[0]])   # every monomial equal to <R0>
mm.word_at(3, 7)         # the explicit word behind entry (3, 7)
mm.words                 # the same, materialised for the whole matrix
 
mm.algebra, mm.cyclicity, mm.hermitian, mm.dim   # what the matrix was built with
mm.stats                 # build diagnostics: timings, distinct words, ...
mm.summary()             # a short, human-readable report combining the above
\end{lstlisting}
 
Equal indices mean the same variable. A monomial can be looked up three ways: \code{mm[w]} and \code{mm.index\_of(w)} are equivalent and both raise \code{UnknownMonomial} if \code{w} never occurs anywhere in the hierarchy; \code{mm.get(w)} takes an optional default (\code{None} if omitted) and never raises; and \code{w in mm} answers the membership question directly, without needing either. \code{mm[r, c]} is a different operation entirely --- a bare pair of integers, rather than a monomial, indexes the matrix itself, exactly as \code{mm.matrix[r, c]} would.
 
\code{word\_at(r, c)} recovers the explicit operator word behind one entry; \code{mm.words} does the same for every entry at once, but is built lazily and cached rather than eagerly, since for a large hierarchy it is the single biggest object the build can produce --- prefer \code{word\_at} unless the full table is genuinely needed. Finally, \code{mm.algebra}, \code{mm.cyclicity}, \code{mm.hermitian} and \code{mm.dim} record exactly what the matrix was built with, which is often the fastest way to confirm a hierarchy under construction was declared the way it was intended to be; \code{mm.stats} and \code{mm.summary()} report on the build itself --- timings, the number of distinct words the closure expanded, and the resulting compression ratio --- which is where the numbers behind Table~\ref{tab:scaling} come from.
 
\subsection{Constraints the matrix cannot express}
\label{sec:constraints}
 
Positivity and the four operator relations of Sec.~\ref{sec:relations} are baked into the matrix by construction --- nothing further is needed for those. What is \emph{not} built in is any relation that holds \emph{between} distinct SDP variables rather than within a single word. \MoMPy{} provides exactly two functions for this, one for each relation of this kind that commonly arises; both return a list of \code{LinearConstraint} objects rather than solver-specific expressions, so neither ties the result to CVXPY.
 
\paragraph{Normalisation.} A POVM's outcomes sum to the identity, $\sum_b M_{b|y} = \id$. For every known monomial containing one of that POVM's outcomes at some position, the sum over outcomes at that position must equal the same monomial with the position deleted (equivalently, replaced by the identity):
\begin{lstlisting}
constraints = mm.normalisation_constraints(M[0])   # every instance of sum_b M[0][b] == 1
\end{lstlisting}
The call scans every monomial in the built hierarchy, not just one, so it returns every occurrence of the relation the hierarchy happens to contain, deduplicated.
 
\paragraph{Joint measurability and marginals.} \code{marginal\_constraints(joint, marginal)} is the same idea one level up: \code{joint} is a parent POVM's outcome labels, each required to marginalise onto the single operator \code{marginal} rather than onto the identity --- exactly the relation a jointly-measurable (compatible) pair of POVMs must satisfy, where the parent POVM's outcomes reproduce a child POVM's own statistics. This is the right tool whenever compatibility is imposed as an explicit parent POVM rather than through commutation:
\begin{lstlisting}
constraints = mm.marginal_constraints(joint, M[0][0])   # sum(joint outcomes) == M[0][0]
\end{lstlisting}
 
\paragraph{Turning constraints into solver objects.} Each \code{LinearConstraint} means \code{sum(variables at .lhs) == variable at .rhs}: \code{.lhs} is a tuple of the SDP variable indices being summed, \code{.rhs} is the single variable index their sum must equal, and both are plain integers, so the object depends on no particular modelling layer. \code{.apply(variables)} turns one constraint into an actual equality given any mapping from variable index to value --- a plain dict, or a CVXPY model --- and \code{model.apply(constraints)} does the same for a whole list at once, where \code{model} is what \code{mm.to\_cvxpy()} returns (the next subsection):
\begin{lstlisting}
model = mm.to_cvxpy()
ct = list(model.constraints)                        # positivity, zeros
ct += model.apply(mm.normalisation_constraints(M[0]))
\end{lstlisting}
A fourth field, \code{.words}, records the \code{(prefix, suffix)} pair --- the monomial with the substituted position punched out --- that produced the constraint; it plays no role in solving and exists purely for inspecting a hierarchy that is not behaving as expected.
 
Both functions accept \code{dedupe} (default \code{True}), which drops constraints that are trivially true (a single term equal to itself) and any constraint reached more than once from different monomials; pass \code{dedupe=False} to see the raw, unfiltered set instead.
 
\subsection{Handing it to a solver}
 
\MoMPy{} contains a built-in path to send the built SDP to CVXPY through the code line \code{model= mm.to\_cvxpy()}. The predetermined constraints (e.g.~moment matrix positiveity) are then passed to the list \code{ct = list(model.constraints)}, to which the user can add additional constraints for the SDP. An example of implementation can be found below.
 
\begin{lstlisting}
import cvxpy as cp
 
model = mm.to_cvxpy()          # G >> 0 and the zero class pinned to zero
ct    = list(model.constraints)
ct   += model.apply(mm.normalisation_constraints(M[0]))
ct   += [model[[R[x]]] == 1.0 for x in range(3)]
 
W = sum(model[[R[x], M[0][x]]] for x in range(3))
problem = cp.Problem(cp.Maximize(W), ct)
problem.solve(solver=cp.MOSEK)
\end{lstlisting}
 
The model is indexed by monomial (\code{model[[R[0], M[1][0]]]}) or by variable index. Internally the matrix is assembled as a single CVXPY atom by gathering one vector variable through the integer index matrix, rather than creating one object per entry, which matters once $n$ reaches a few hundred.
 
\warnbox{\code{to\_cvxpy} deliberately does \emph{not} impose
$\Theta(\id) = 1$. In a tracial relaxation $\Theta(\id) = \Tr(\id) = D$ is the Hilbert-space dimension; pinning it to $1$ forces $D = 1$ and quietly turns, for instance, state discrimination into random guessing. Write \code{ct.append(model.identity == 1)} explicitly when the state-moment convention is what you mean.}
 
Nothing ties the package to CVXPY. The matrix of indices is a plain NumPy array and can be fed to any modelling layer:
 
\begin{lstlisting}
variables = {i: make_variable() for i in mm.variable_indices}
variables[mm.zero_index] = 0.0
G = [[variables[mm.matrix[r, c]] for c in range(mm.n)] for r in range(mm.n)]
\end{lstlisting}
 
\subsection{A cookbook for reduction rules}
\label{sec:cookbook}
 
Section~\ref{sec:relations} listed the four relations \MoMPy{} understands, in the abstract. The question a user actually faces is operational: given a scenario with several distinct families of operators, which relation applies to which family, and how is that wired together in code? This section answers that directly: a decision guide for identifying which relation, if any, applies to a given family or pair of families; one running example engineered to touch all four relations at once; and a short recipe that generalises to any scenario.
 
\begin{table}[htbp]
\centering
\small
\begin{tabular}{@{}p{5.8cm}p{2.6cm}p{4.4cm}@{}}
\toprule
Ask, for each family (or pair of families) in turn & If yes & Declare with \\
\midrule
Is every operator in the family its own square, $P^2=P$? Rank is irrelevant --- a rank-one pure state and a general projective-measurement element both qualify. & Idempotent & \code{idempotent=True}, or \code{declare\_idempotent} \\
Do distinct members of the family annihilate each other, $P_iP_j=0$? & Orthogonal & \code{orthogonal=True}, or \code{declare\_orthogonal} \\
Do two families (or a family with itself) act on independent degrees of freedom, or are otherwise known to commute? & Commuting & \code{declare\_commuting(A, B)} \\
Does the objective genuinely need $w$ and $w^\dagger$ tracked as distinct SDP variables, rather than only their real part (Sec.~\ref{sec:relations})? & Non-Hermitian problem & \code{hermitian=False} at \code{MomentProblem} construction \\
\bottomrule
\end{tabular}
\caption{A decision guide for the four relations of Sec.~\ref{sec:relations}. Work through every family, and every pair of families, against these four questions. A ``no'' throughout does not mean the relation between two operators is unavailable in \MoMPy{} --- including, in particular, anticommutation --- only that it is not expressible as a word identification; it may instead belong downstream, as a linear constraint between already-built moments (Sec.~\ref{sec:constraints}).}
\label{tab:decision-guide}
\end{table}
 
Take a toy scenario chosen for coverage rather than physical realism: two pure states $R_0, R_1$ (idempotent, and mutually commuting, as states prepared one after another typically are); one two-outcome projective measurement $M = \{M_0, M_1\}$ (idempotent and orthogonal, the usual projective assumption); and two further Hermitian operators $E_0, E_1$ --- an environment or a spectator degree of freedom --- that are neither projective nor idempotent, but that are known to commute with the states $R$ and, by a separate physical argument, to be mutually orthogonal to each other. Four relations, four roles, one hierarchy.
 
Once every family, and every pair of families, has been checked against Table~\ref{tab:decision-guide}, wiring the result into code is the same four steps, in this order, every time:
 
\begin{sloppypar}
\begin{enumerate}[leftmargin=*,itemsep=2pt]
\item \textbf{Allocate labels family by family}, declaring idempotency (rank-one, or projective, character) at allocation time whenever it is already known: \code{add\_family} for a generic family, \code{add\_povm}/\code{add\_povm\_family} for measurements (idempotent \emph{and} orthogonal by default), \code{add\_tensor} for anything indexed by more than one integer, \code{add} for a single one-off operator.
\item \textbf{Declare orthogonality} for any family whose members must vanish pairwise but that was \emph{not} already registered as such by \code{add\_povm} --- with \code{declare\_orthogonal}, or as an entry of \code{orthogonal\_sets} when building the \code{Algebra} directly.
\item \textbf{Declare commutation} for every pair of families (or a family with itself) that act on separate degrees of freedom, or are otherwise known to commute --- with \code{declare\_commuting(A, B)}. Pass the same list twice, \code{declare\_commuting(A, A)}, for a family that commutes internally.
\item \textbf{Decide the reality relation last, and separately}, when building the problem rather than the algebra: leave \code{hermitian=True} (the default for every problem class) unless the objective genuinely needs complex moments tracked apart from their conjugates.
\end{enumerate}
\end{sloppypar}
 
Applied to the scenario above, all four steps fit in one short block:
 
\begin{lstlisting}
from MoMPy import OperatorSet, MomentProblem, generate_monomials
 
ops = OperatorSet()
R = ops.add_family(2, idempotent=True)     # two pure states, rank-1
M = ops.add_povm(2)                        # one projective measurement:
                                            #   idempotent + orthogonal, both for free
E = ops.add_family(2, idempotent=False)    # two generic Hermitian operators
 
ops.declare_commuting(R, R)                # the states commute among themselves
ops.declare_commuting(R, E)                # ... and with the E's (separate subsystem)
ops.declare_orthogonal(E)                  # E_0 E_1 = 0, declared explicitly
 
algebra   = ops.algebra()
monomials = generate_monomials(list(R) + list(M) + list(E), level=2)
 
mm = MomentProblem(monomials, algebra, dim=1).build()      # hermitian=True, the default
print(mm.summary())
\end{lstlisting}
 
Every declared relation now shows up exactly where it should. Idempotency collapsed $R_0 R_0$ onto $R_0$:
\begin{lstlisting}
mm.index_of([R[0], R[0]]) == mm.index_of([R[0]])   # True
\end{lstlisting}
orthogonality, declared for $E$ by hand rather than inherited from \code{add\_povm}, zeroed $E_0 E_1$:
\begin{lstlisting}
mm.index_of([E[0], E[1]]) == mm.zero_index         # True
\end{lstlisting}
and commutation, declared once within $R$ and once across $R$ and $E$, folds every reordering of a word mixing the two families into a single class: \code{mm.equivalents([R[0]])} lists every monomial the closure now treats as identical to $R_0$.
 
The one relation that is a property of the \emph{problem}, not of any operator family, is reality. Building the same \code{monomials} and \code{algebra} with \code{hermitian=False} instead keeps $w$ and $w^\dagger$ as distinct variables, rather than identifying $\mathrm{Re}\expect{w} = \mathrm{Re}\expect{w^\dagger}$:
\begin{lstlisting}[style=sh]
hermitian=True   ->  99 SDP variables
hermitian=False  -> 105 SDP variables
\end{lstlisting}
The six extra variables are exactly the length-three and length-four words whose reversal is a genuinely different word once cyclicity and the other three relations have already done their work; toggling the flag is therefore also a quick way to \emph{see} how much the reality relation alone buys on a given hierarchy.
 
\warnbox{Order rarely matters at the declaration stage --- \code{Algebra} takes the finished sets regardless of the sequence in which \code{OperatorSet} accumulated them --- but it matters for \emph{reasoning about} a hierarchy. Idempotency and orthogonality are properties of single labels or small fixed sets and are always safe to declare the moment a family is allocated. Commutation is the relation that most often deserves a second look, because declaring it between two families that do \emph{not} actually commute physically silently turns a valid relaxation into an invalid one (Sec.~\ref{sec:tracial-vs-state} works through exactly this failure mode for cyclicity, its tracial-only cousin). When in doubt, build the algebra twice --- with and without the candidate relation --- and check whether a known feasible point survives.}
 
\subsection{Designing a new relaxation: a worked recipe}
\label{sec:recipe}
 
Every example so far starts from a scenario already close to a template in the literature. The more common situation in practice is a scenario that is not: a new physical question, with its own operators and its own assumptions, and no worked moment matrix to imitate. This section makes explicit, as a numbered recipe, the process implicit in every example in this paper, and works it end to end on a scenario used nowhere else in it: the tripartite Mermin inequality~\cite{mermin1990}, whose maximal quantum violation is the algebraic signature of the Greenberger--Horne--Zeilinger (GHZ) paradox~\cite{ghz1989}.
 
\begin{enumerate}[leftmargin=*,itemsep=2pt]
\item \textbf{Name the objects.} List every physically distinct family of operators the problem needs --- one entry per party, per setting, or per role --- before writing a line of code.
\item \textbf{Classify each family on its own.} For every family, ask the first two questions of Table~\ref{tab:decision-guide}: is it idempotent? Is it internally orthogonal?
\item \textbf{Classify every pair of families.} Ask the third question of Table~\ref{tab:decision-guide} of every pair of families, including a family against itself: do they commute?
\item \textbf{Separate word relations from linear constraints.} Anything left over that constrains what different, already-distinct moments must \emph{equal}, rather than identifying two words outright, is not an algebra declaration; set it aside for step~10 (Sec.~\ref{sec:relations}).
\item \textbf{Choose the hierarchy's flavour.} Tracial or state, scalar or block, Hermitian or not (Sec.~\ref{sec:tracial-vs-state}) --- a property of the whole problem, decided once, not per family.
\item \textbf{Allocate.} Create every family with \code{OperatorSet}, passing the \code{idempotent}/\code{orthogonal} keywords decided in step~2 at allocation time.
\item \textbf{Declare the cross-family relations.} Wire in every commuting pair found in step~3 with \code{declare\_commuting}.
\item \textbf{Assemble the algebra.} \code{ops.algebra()}.
\item \textbf{Choose the monomial list.} Include enough generators that every term of the objective, and every constraint set aside in step~4, appears as a literal entry of the built matrix rather than merely being implied by closure: every such term must occur as $u_iu_j^\dagger$ for some pair of listed generators $u_i,u_j$ --- or with $u_j$ the identity, for a first-order term --- among the rows and columns the matrix actually has (Eq.~\eqref{eq:gamma-intro}).
\item \textbf{Build, constrain, solve, and check.} \code{MomentProblem(\ldots).build()}, \code{to\_cvxpy()}, add the constraints set aside in step~4 together with any normalisation (Sec.~\ref{sec:constraints}), solve, and --- before trusting the number --- compare it against whatever is independently known: an algebraic bound, a classical/local bound obtained by relaxing one assumption, or a simpler special case. A relaxation that silently pins a dimension (the \code{model.identity} warning of Sec.~\ref{sec:tutorial}) or omits a generator the objective needs will not usually fail loudly; it will return a plausible-looking number that is simply wrong.
\end{enumerate}
 
\paragraph{The scenario.} Three separated parties, Alice, Bob and Charlie, each choose one of two dichotomic measurement settings and record a $\pm1$ outcome, $A_x, B_y, C_z \in \{\pm1\}$ for $x,y,z\in\{0,1\}$, represented as usual by two-outcome POVMs, $A_x = A_{0|x} - A_{1|x}$ and likewise for $B$, $C$. The Mermin expression is
\begin{equation}
  M = \expect{A_0 B_0 C_1} + \expect{A_0 B_1 C_0} + \expect{A_1 B_0 C_0} - \expect{A_1 B_1 C_1} .
  \label{eq:mermin}
\end{equation}
Every local hidden-variable model satisfies $M \le 2$; quantum mechanics allows $M$ up to the algebraic maximum, $M \le 4$, saturated by the GHZ state --- three parties, perfectly correlated, with no classical analogue~\cite{mermin1990,ghz1989}.
 
Steps 1--5 are a short list here: three families of measurements, one per party, each a \code{2}-setting, \code{2}-outcome projective measurement built with the default \code{add\_povm\_family}, hence idempotent and orthogonal within each setting; every pair of \emph{different} parties' families commutes (distinct, spacelike-separated subsystems), while a single party's own two settings need not (the same assumption CHSH makes in Sec.~\ref{sec:chsh}); nothing is left over for step~4; and the flavour, as for every Bell-type scenario in this paper, is state/NPA moments: \code{cyclicity=False}, \code{hermitian=True} (the default), \code{dim=1}. Steps 6--10 are the code:
 
\begin{lstlisting}
import cvxpy as cp
from MoMPy import OperatorSet, MomentProblem
 
ops = OperatorSet()
A = ops.add_povm_family(2, 2)          # A[x][a]
B = ops.add_povm_family(2, 2)          # B[y][b]
C = ops.add_povm_family(2, 2)          # C[z][c]
flat_a = [m for row in A for m in row]
flat_b = [m for row in B for m in row]
flat_c = [m for row in C for m in row]
ops.declare_commuting(flat_a, flat_b)  # distinct parties
ops.declare_commuting(flat_a, flat_c)
ops.declare_commuting(flat_b, flat_c)
 
monomials  = flat_a + flat_b + flat_c
monomials += [[a, b] for a in flat_a for b in flat_b]   # level 1 + AB + AC + BC
monomials += [[a, c] for a in flat_a for c in flat_c]
monomials += [[b, c] for b in flat_b for c in flat_c]
monomials += [[a, b, c] for a in flat_a for b in flat_b for c in flat_c]  # + ABC
 
mm = MomentProblem(monomials, ops.algebra(), dim=1, cyclicity=False).build()
 
model = mm.to_cvxpy()
ct = list(model.constraints)
ct.append(model.identity == 1)
for x in range(2):
    ct.append(sum(model[[A[x][a]]] for a in range(2)) == model.identity)
for y in range(2):
    ct.append(sum(model[[B[y][b]]] for b in range(2)) == model.identity)
for z in range(2):
    ct.append(sum(model[[C[z][c]]] for c in range(2)) == model.identity)
 
def corr3(x, y, z):
    return sum((-1)**(a + b + c) * model[[A[x][a], B[y][b], C[z][c]]]
               for a in range(2) for b in range(2) for c in range(2))
 
M = corr3(0,0,1) + corr3(0,1,0) + corr3(1,0,0) - corr3(1,1,1)
cp.Problem(cp.Maximize(M), ct).solve(solver=cp.MOSEK)
print(M.value)          # 4.000000
\end{lstlisting}
 
Step~10's check: relaxing step~3's assumption so that a single party's own two settings commute too --- adding \code{declare\_commuting(flat\_a, flat\_a)}, and likewise for \code{flat\_b} or \code{flat\_c}, so that every operator in the construction commutes with every other --- makes the whole algebra effectively classical, exactly as it did for CHSH. Rebuilding with that one change and solving again returns
\begin{lstlisting}[style=sh]
quantum bound (as above)                    ->  4.000000
2-quantum, 1-classical (Alice commuting)    ->  2.828426
classical bound (all commuting)             ->  2.000000
\end{lstlisting}
matching, to solver precision, the two closed-form values this scenario is known for: Mermin's local bound of $2$ and the GHZ paradox's saturation of the algebraic maximum of $4$~\cite{mermin1990,ghz1989} --- the same relaxation and the same four relations, reproducing both the quantum and local optima of this particular Mermin functional --- a result they were never specifically built for. Building the moment matrix itself, for either variant, takes under three seconds: from the $125\times125$ matrix of monomials, the closure explores $476{,}317$ distinct words and collapses them to $1162$ SDP variables in the quantum case, or $730$ once every operator commutes in the classical case --- at the fast end of the range Sec.~\ref{sec:performance} surveys for scenarios of comparable word count.
 
\section{Worked applications}
\label{sec:applications}
 
\subsection{Tsirelson's bound from the NPA hierarchy}
\label{sec:chsh}
 
The first and simplest example we present is a Bell scenario with two dichotomic measurements in each side, commonly known as the CHSH scenario \cite{chsh1969}. Two separated parties share a quantum state. Alice measures $A_{a|x}$, Bob measures $B_{b|y}$, with $a, b, x, y \in \{0,1\}$. The CHSH functional is
\begin{equation}
  S = \expect{A_0 B_0} + \expect{A_1 B_0} + \expect{A_0 B_1}
      - \expect{A_1 B_1},
  \qquad
  \expect{A_x B_y} = \sum_{a,b} (-1)^{a+b} \, p(ab|xy).
\end{equation}
Any Bell-local strategy reaches $S \le 2$; quantum mechanics allows for Bell non-locality, thus reaching $S \le 2\sqrt2$ \cite{cirelson1980}. With \MoMPy{}, the local bound can also be retrieved through measurement commutation in one side. For projective measurements this works because joint measurability is equivalent to commutation, so declaring one more entry in \code{commuting\_pairs} for Alice's measurements suffices --- no new machinery and no new monomials. Section~\ref{sec:compat} takes the operationally primitive route instead, in the steering scenario where it belongs.
 
\begin{lstlisting}
import cvxpy as cp
from MoMPy import OperatorSet, MomentProblem
 
ops = OperatorSet()
A = ops.add_povm_family(2, 2)          # A[x][a]
B = ops.add_povm_family(2, 2)          # B[y][b]
flat_a = [m for row in A for m in row]
flat_b = [m for row in B for m in row]
ops.declare_commuting(flat_a, flat_b)  # distinct Hilbert space
# ops.declare_commuting(flat_a, flat_a)  # uncomment for local bound
 
monomials  = flat_a + flat_b
monomials += [[a, b] for a in flat_a for b in flat_b]   # level 1 + AB
 
mm = MomentProblem(monomials, ops.algebra(), dim=1, cyclicity=False).build()
 
model = mm.to_cvxpy()
ct = list(model.constraints)
ct.append(model.identity == 1)                          # <psi|1|psi> = 1
for x in range(2):
    ct.append(sum(model[[A[x][a]]] for a in range(2)) == model.identity)
for y in range(2):
    ct.append(sum(model[[B[y][b]]] for b in range(2)) == model.identity)
 
def corr(x, y):
    return sum((-1)**(a + b) * model[[A[x][a], B[y][b]]]
               for a in range(2) for b in range(2))
 
S = corr(0, 0) + corr(1, 0) + corr(0, 1) - corr(1, 1)
cp.Problem(cp.Maximize(S), ct).solve(solver=cp.MOSEK)
print(S.value)          # 2.828427  =  2 * sqrt(2)
\end{lstlisting}

The relaxation recovers Tsirelson's bound to solver precision. Building with \code{cyclicity=True} (the default) instead --- or, equivalently here, un-commenting the line imposing commuting measurements on Alice's side --- returns $2.0000$: the error discussed in Sec.~\ref{sec:tracial-vs-state}, and the reason \code{cyclicity=False} has to be set explicitly for every Bell-type scenario in this paper rather than left at its default.

\subsection{Steering and measurement compatibility: a block moment matrix with $\Theta = \Tr_A$}
\label{sec:compat}
 
The previous section obtained the local bound by \emph{declaring} that Alice's two measurements commute. Commutation is not, however, the operationally primitive notion. The primitive notion is \emph{joint measurability}: a set $\{A_{a|x}\}$ is jointly measurable when there is a single parent POVM $\{G_\lambda\}$ and classical post-processings with $A_{a|x} = \sum_\lambda p(a|x,\lambda)G_\lambda$, so that all of the measurements can be inferred from one. For unsharp measurements the two notions genuinely differ --- joint measurability is strictly weaker than commutation~\cite{uola2020} --- and it is the weaker one that carries the physics. The sharpest statement connecting it to quantum correlations lives not in the Bell scenario but in the \emph{steering} scenario: a set of measurements is not jointly measurable if and only if it can be used to demonstrate steering with some shared state~\cite{quintino2014,uola2014}. This section imposes joint measurability itself, as a statement about marginals of a global POVM, and never mentions a commutator.
 
Steering is also the natural home for this tutorial's block construction, and for a reason worth spelling out. In a Bell scenario both wings are black boxes and every moment is a number. In a steering scenario the two wings are treated \emph{asymmetrically}: Alice's device is uncharacterised, while Bob's is fully trusted and his Hilbert space has known dimension $d$. The object one actually reasons about is therefore Bob's \emph{assemblage}, the set of unnormalised conditional states
\begin{equation}
  \sigma_{a|x} \;=\; \Tr_A\!\big[(A_{a|x}\otimes\id)\,\rho_{AB}\big] ,
  \label{eq:assemblage}
\end{equation}
which are operators on Bob's space, not numbers. That is exactly a choice of moment map: following Ref.~\cite{dalessandro2025}, taking
\begin{equation}
  \Theta(X) \;=\; \Tr_A\!\big[(X\otimes\id)\,\rho_{AB}\big] ,
  \qquad \Theta : \CC^{D\times D}\to\CC^{d\times d} ,
  \label{eq:theta-steering}
\end{equation}
the partial trace over the untrusted side, makes $\Gam_{u,v} = \Theta(uv^\dagger)$ a block moment matrix whose first column is the assemblage itself, $\Gam_{A_{a|x},\id} = \sigma_{a|x}$, and whose identity entry is Bob's reduced state, $\Theta(\id) = \rho_B$. The map~\eqref{eq:theta-steering} is completely positive, so Prop.~\ref{prop:positivity}(ii) applies and $\Gam \succeq 0$. Neither $\Tr$ nor $\bra\psi\cdot\ket\psi$ would do here: partial-tracing one wing of a bipartite state is not globally cyclic and does not return a scalar. In \MoMPy{} this is \code{dim=}$d$ with \code{cyclicity=False} and \code{hermitian=False}, and nothing else changes.
 
\paragraph{The criterion.} We use the linear steering criterion of Ref.~\cite{uola2020}, Eq.~(9) there, originally due to Ref.~\cite{cavalcanti2009}. For two qubits it is built from
\begin{equation}
  Q \;=\; \sigma_x\otimes\sigma_x + \sigma_y\otimes\sigma_y + \sigma_z\otimes\sigma_z ,
  \label{eq:Qsteer}
\end{equation}
for which $|\expect{Q}| \le 1$ on separable states, while unsteerable states obey the weaker $|\expect{Q}| \le \sqrt3$: Alice's wing contributes $\pm1$ values $a_k$ and Bob's a genuine Bloch vector $\vec b$ with $\|\vec b\|\le1$, so $|\expect{Q}| = |\vec a\cdot\vec b| \le \sqrt3$. The singlet gives $\expect{Q} = -3$, saturating the algebraic maximum $|\expect{Q}|=3$. Since Bob is trusted we keep his three Pauli measurements fixed and optimise over Alice's, so the quantity actually relaxed is
\begin{equation}
  \mathcal{Q} \;=\; \sum_{k=x,y,z} \big\langle\, A_k \otimes \sigma_k \,\big\rangle
  \;=\; \sum_{k} \Tr\!\big[\sigma_k \,(\sigma_{0|k} - \sigma_{1|k})\big] ,
  \qquad A_k = A_{0|k} - A_{1|k} ,
  \label{eq:Qfunctional}
\end{equation}
whose maximum is $3$, attained on the singlet with $A_k = -\sigma_k$, and whose unsteerable bound is the same $\sqrt3$ by the argument just given. Those are the two numbers we compute.
 
\paragraph{Imposing joint measurability.} Joint measurability is not a relation between \emph{words}: it does not say that two monomials name the same operator, but that certain already-distinct moments sum to certain others. It therefore belongs to the second of \MoMPy's two mechanisms (Sec.~\ref{sec:relations}), and \code{marginal\_constraints} is the tool for it. We allocate one global POVM $\{G_\lambda\}$ whose outcomes are the joint assignments $\lambda = (a_x,a_y,a_z) \in \{0,1\}^3$, and require Alice's three settings to be its marginals,
\begin{equation}
  A_{a|k} \;=\; \sum_{\lambda\,:\,\lambda_k = a} G_\lambda ,
  \label{eq:jm-marginal}
\end{equation}
which makes them jointly measurable by construction. Two points make this exactly the general statement rather than a special case. First, taking $\lambda$ to range over all deterministic assignments and the post-processing to be deterministic is no restriction: any classical post-processing can be absorbed into $\{G_\lambda\}$. Second, $\{G_\lambda\}$ may be declared \emph{projective} without loss of generality, because a Naimark dilation realises any POVM as a projective measurement on a larger space, and the relaxation places no bound on Alice's dimension $D$; the dilation leaves every $\sigma_\lambda$ unchanged. Crucially, Alice's own $A_{a|k}$ are \emph{not} declared sharp --- they are allocated as general POVM elements, and Eq.~\eqref{eq:jm-marginal} will generally make them unsharp, which is precisely the regime where joint measurability and commutation part company.
 
\begin{lstlisting}
import cvxpy as cp
import numpy as np
from itertools import product
from MoMPy import OperatorSet, MomentProblem
 
sx = np.array([[0, 1], [1, 0]], complex)
sy = np.array([[0, -1j], [1j, 0]], complex)
sz = np.array([[1, 0], [0, -1]], complex)
paulis, d = [sx, sy, sz], 2
LAM = list(product([0, 1], repeat=3))       # lambda = (a_x, a_y, a_z)
 
def steering_bound(jointly_measurable):
    ops = OperatorSet()
    if jointly_measurable:
        # no sharpness assumed: the only requirement is Eq. (17)
        A = ops.add_povm_family(3, 2, idempotent=False, orthogonal=False)
        G = ops.add_povm(8)                  # global POVM, projective by Naimark
    else:
        A = ops.add_povm_family(3, 2)        # sharp: extremal for this criterion
        G = []
    monomials = [m for row in A for m in row] + list(G)
 
    bm = MomentProblem(monomials, ops.algebra(), dim=d,
                       cyclicity=False, hermitian=False).build()   # Theta = Tr_A
 
    model = bm.to_cvxpy()
    ct = list(model.constraints)
    rho_B = model[bm.identity_index]                    # Theta(1) = rho_B
    ct.append(cp.real(cp.trace(rho_B)) == 1)            # Tr rho_B = 1
    for k in range(3):                                  # sum_a sigma_{a|k} = rho_B
        ct.append(model[[A[k][0]]] + model[[A[k][1]]] == rho_B)
 
    if jointly_measurable:
        ct += model.apply(bm.normalisation_constraints(list(G)))
        for k in range(3):                              # Eq. (17), via marginals
            for a in (0, 1):
                joint = [G[i] for i, lam in enumerate(LAM) if lam[k] == a]
                ct += model.apply(bm.marginal_constraints(joint, A[k][a]))
 
    Q = sum(cp.real(cp.trace(paulis[k] @ (model[[A[k][0]]] - model[[A[k][1]]])))
            for k in range(3))
    cp.Problem(cp.Maximize(Q), ct).solve(solver=cp.SCS, eps=1e-9)
    return Q.value
\end{lstlisting}
 
\noindent Every block in \code{model} is a $2\times2$ operator on Bob's space indexed by a word in Alice's operators, and the objective is read straight off the first column of $\Gam$ --- which is to say, off the assemblage~\eqref{eq:assemblage}. The two answers are
\begin{lstlisting}[style=sh]
Alice unrestricted               ->  3.000000
joint measurability imposed      ->  1.732051   =  sqrt(3)
\end{lstlisting}
Both are exact. The unrestricted value $3$ is the maximum of Eq.~\eqref{eq:Qfunctional}, saturated by the singlet; the constrained value is $\sqrt3 = 1.732051$, the unsteerable bound of Ref.~\cite{uola2020}, Eq.~(9), recovered to solver precision without ever writing down a local hidden state model. The gap between them is the content of the joint-measurability/steering equivalence~\cite{quintino2014,uola2014}: incompatibility on the untrusted wing is what buys the violation, and Eq.~\eqref{eq:jm-marginal} removes it.
 
It is worth seeing why so little machinery is needed. Because $\{G_\lambda\}$ is declared projective, \MoMPy{} identifies the word $G_\lambda G_\lambda^\dagger$ with $G_\lambda$, so the diagonal block $\Gam_{G_\lambda,G_\lambda}$ \emph{is} $\sigma_\lambda := \Theta(G_\lambda)$; a diagonal block of a positive semidefinite matrix is positive semidefinite, so $\sigma_\lambda \succeq 0$ comes for free from $\Gam \succeq 0$. Together with $\sum_\lambda \sigma_\lambda = \rho_B$ from \code{normalisation\_constraints} and Eq.~\eqref{eq:jm-marginal} from \code{marginal\_constraints}, the relaxation contains exactly a local hidden state model, $\sigma_{a|k} = \sum_{\lambda:\lambda_k=a}\sigma_\lambda$ with $\sigma_\lambda \succeq 0$ and $\sum_\lambda \Tr\sigma_\lambda = 1$ --- which is why the bound comes out tight at the first level rather than converging to it from above. The matrices are correspondingly small: $7\times7$ with $32$ SDP variables when Alice is unrestricted, $15\times15$ with $148$ variables once the global POVM is present, each built in about a millisecond. Enlarging the monomial list cannot change either value. For the unrestricted branch we checked this directly --- the full second level, a $43\times43$ matrix with $512$ variables, again returns $3.000000$ --- and for the constrained branch it follows without computation: the first level already attains $\sqrt3$, which is the true unsteerable maximum, and since every level is an upper bound on that maximum and a longer monomial list can only tighten it, no later level can move it.
 
\warnbox{\textbf{Compatibility versus commutation.} The exact same problem can be solved declaring Alice's measurements compatible through commuting $ \left[A_{a|x},A_{a'|x'}\right]=0$ with \code{declare\_commuting}, as for the \emph{projective} measurements allocated by \code{add\_povm} that is exactly joint measurability. For general POVMs the two notions come apart: joint measurability is a strictly weaker requirement than commutation~\cite{uola2020}, and imposing it means asserting that a parent POVM $\{G_\lambda\}$ with $G_\lambda \succeq 0$ exists.}

\subsection{Prepare-and-measure: dimension witnesses and state discrimination}
\label{sec:pam}
 
Now a genuinely tracial family of problems. Alice encodes $x \in \{0,\dots,n_X-1\}$ in a pure state $\rho_x$ and sends it to Bob, who measures it and reports an outcome. Every quantity of interest here is a trace of an operator product with the state \emph{inside} the algebra --- $\Tr(\rho_x M_b)$, not $\bra\psi \cdots \ket\psi$ --- so \code{MomentProblem} is the right choice throughout this section, and cyclicity is legitimate. What changes between the three examples below is not the tool but the constraint placed on $\{\rho_x\}$: a bound on their pairwise overlap, a bound on their Hilbert-space dimension, and a bound on their ``photon-number'' components.
 
\paragraph{State discrimination limited by overlap.} Bob measures $\{M_b\}$, $b\in\{0,\dots,n_X-1\}$, and guesses $x$ from $b$. The success probability is
\begin{equation}
  P_{\mathrm{succ}} = \frac{1}{n_X}\sum_x \Tr(\rho_x M_x).
\end{equation}
Without further restrictions the states can be made mutually orthogonal and $P_{\mathrm{succ}} = 1$; the problem becomes interesting once the communication is restricted. An operationally natural choice is to constrain their overlap, i.e.~considering $\rho_x=\ket{\psi_x}\bra{\psi_x}$ with $|\braket{\psi_x}{\psi_{x'}}|^2 \geq$ \code{min\_overlap} and thus $\Tr(\rho_x \rho_{x'})\geq$ \code{min\_overlap}.
 
\begin{lstlisting}
import cvxpy as cp
from MoMPy import MomentProblem, OperatorSet
 
nX = 3
ops = OperatorSet()
R = ops.add_family(nX, idempotent=True)   # pure states
M = ops.add_povm(nX)                      # one projective measurement
ops.declare_commuting(R, R) # for classical implementation (comment for quantum bound)
 
monomials  = list(R) + list(M)
monomials += [[r, m] for r in R for m in M]
monomials += [[r, s] for r in R for s in R]
monomials += [[r, s, t] for r in R for s in R for t in R]   # third order
 
mm = MomentProblem(monomials, ops.algebra(), dim=1).build()
 
def bound(min_overlap):
    model = mm.to_cvxpy()
    ct = list(model.constraints)
    ct += [model[[R[x]]] == 1.0 for x in range(nX)]     # Tr(rho) = 1
    ct += model.apply(mm.normalisation_constraints(M))  # sum_b M_b = 1
    if min_overlap is not None:
        ct += [model[[R[x], R[xx]]] >= min_overlap
               for x in range(nX) for xx in range(nX) if x != xx]
    P = sum(model[[R[x], M[x]]] for x in range(nX)) / nX
    cp.Problem(cp.Maximize(P), ct).solve(solver=cp.MOSEK)
    return P.value
\end{lstlisting}
 
\noindent Note that $\Theta(\id) = \Tr(\id) = D$ is left free here. The resulting bounds are
 
\begin{center}
\begin{tabular}{@{}lccccc@{}}
\toprule
minimum overlap $c$ & unrestricted & $0.1$ & $0.3$ & $0.5$ & $0.7$ \\
\midrule
$P_{\mathrm{succ}}$ (classical, commuting) & $1.0000$ & $0.9333$ & $0.8000$ & $0.6667$ & $0.5333$ \\
\midrule
$P_{\mathrm{succ}}$ (quantum) & $1.0000$ & $0.9556$ & $0.8665$ & $0.7721$ & $0.6633$ \\
\bottomrule
\end{tabular}
\end{center}

\noindent decreasing monotonically as the states are forced closer together, as they must.
 
\paragraph{Dimension witnesses.} A structurally different question asks not how well the states can be told apart, but how large a Hilbert space is needed to produce a given correlation pattern at all --- a \emph{dimension witness}~\cite{brunner2008,gallego2010}. Here $\Theta(\id) = \Tr(\id) = D$ stops being a nuisance to leave free and becomes the quantity being bounded: fixing it pins the ambient dimension (in the spirit of Ref.\cite{Pauwels2022}), and re-solving for several values of $D$ traces out how much a larger Hilbert space actually buys. Note that $D$ here is the Hilbert-space dimension being witnessed, and is unrelated to the block size $d$, which is $1$ throughout this section.
 
With $n_X = 3$ preparations and two dichotomic measurements $\{M_{0|y}, M_{1|y}\}$, $y \in \{0,1\}$, define the correlator $E(x,y) = \sum_b (-1)^b \Tr\big(\rho_x M_{b|y}\big)$ and the witness \cite{gallego2010}
\begin{equation}
  W = E(0,0) + E(0,1) + E(1,0) - E(1,1) - E(2,0).
  \label{eq:dim-witness}
\end{equation}
\begin{lstlisting}
import cvxpy as cp
from MoMPy import MomentProblem, OperatorSet
 
nX, nY, nB = 3, 2, 2
ops = OperatorSet()
R = ops.add_family(nX, idempotent=True)     # preparations
M = ops.add_povm_family(nY, nB)             # M[y][b], two dichotomic measurements
ops.declare_commuting(R, R)
 
monomials  = list(R) + [m for row in M for m in row]
monomials += [[r, m] for r in R for row in M for m in row]
monomials += [[r, s] for r in R for s in R]
monomials += [[r, s, t] for r in R for s in R for t in R]
 
mm = MomentProblem(monomials, ops.algebra(), dim=1).build()
 
def W_max(dim):
    model = mm.to_cvxpy()
    ct = list(model.constraints)
    ct.append(model.identity == dim)                    # pin Tr(1) = D
    ct += [model[[R[x]]] == 1.0 for x in range(nX)]
    for y in range(nY):
        ct += model.apply(mm.normalisation_constraints(M[y]))
 
    def E(x, y):
        return sum((-1)**b * model[[R[x], M[y][b]]] for b in range(nB))
 
    W = E(0,0) + E(0,1) + E(1,0) - E(1,1) - E(2,0)
    cp.Problem(cp.Maximize(W), ct).solve(solver=cp.MOSEK)
    return W.value
\end{lstlisting}
 
\begin{center}
\begin{tabular}{@{}lcccc@{}}
\toprule
Hilbert-space dimension $D$ & $1$ & $2$ & $3$ & unrestricted \\
\midrule
$W_{\max}$ (classical, commuting) & $1.0000$ & $3.0000$ & $5.0000$ & $5.0000$ \\
\midrule
$W_{\max}$ (quantum) & $1.0000$ & $3.8284$ & $5.0000$ & $5.0000$ \\
\bottomrule
\end{tabular}
\end{center}
 
\noindent A classical ($D=1$) system is limited to $W \le 1$; a bit already reaches $W = 3$, and a qubit $W = 3.8284$; and $D = 3$ saturates the value attainable with no dimension restriction at all. Observing $W > 3$ therefore certifies, from correlations alone, that the physical system generating them is not a classical bit --- the defining feature of a dimension witness, and the reason \code{MomentProblem} with $\Theta(\id)$ pinned rather than left free is exactly the right tool for it.
 
\paragraph{State discrimination with photon-number-constraints \cite{carceller2025}.} Both examples above constrain $\{\rho_x\}$ through a property of the family itself (mutual overlap, ambient dimension). A third, physically distinct constraint arises whenever the preparations are produced by a source with a well-defined photon-number component. Model the herald as a further rank-one projector $P_k=\ket{k}\bra{k}$ (i.e.~a photon-number projector), fixed once and for all, with $\Tr(\rho_x P_k) = 1 - \omega_{x,k}$: the states overlap with a common reference operator rather than with each other. As a running example, take the case considered in Ref.~\cite{VanHimbeeck2017}: $n_K=1$ and $\omega_{x,k}:=\omega$, $\forall x,k$. 
 
\begin{lstlisting}
import cvxpy as cp
from MoMPy import MomentProblem, OperatorSet
 
nX = 3
nK = 1
ops = OperatorSet()
R = ops.add_family(nX, idempotent=True)   # preparations
M = ops.add_povm(nX)                      # guessing measurement
P = ops.add_family(nK, idempotent=True)   # the photon-number projector
ops.declare_commuting(R,R) # for classical bound (comment for quantum)
 
monomials  = list(R) + list(M) + list(P)
monomials += [[r, m] for r in R for m in M]
monomials += [[r, p] for r in R for p in P]
monomials += [[r, s] for r in R for s in R]
 
mm = MomentProblem(monomials, ops.algebra(), dim=1).build()
 
def bound(omega):
    model = mm.to_cvxpy()
    ct = list(model.constraints)
    ct += [model[[R[x]]] == 1.0 for x in range(nX)]
    ct += [model[[P[k]]] == 1.0 for k in range(nK)]
    ct += [model[[R[x], P[k]]] == 1.0 - omega for x in range(nX) for k in range(nK)]
    ct += model.apply(mm.normalisation_constraints(M))
    P_succ = sum(model[[R[x], M[x]]] for x in range(nX)) / nX
    cp.Problem(cp.Maximize(P_succ), ct).solve(solver=cp.MOSEK)
    return P_succ.value
\end{lstlisting}
 
\begin{center}
\begin{tabular}{@{}lcccccc@{}}
\toprule
$\omega$ & $0$ & $0.05$ & $0.10$ & $0.20$ & $0.30$ & $1$ (unrestricted) \\
\midrule
$P_{\mathrm{succ}}$ (classical, commuting) & $0.3333$ & $0.5563$ & $0.6518$ & $0.7816$ & $0.8713$ & $1.0000$ \\
\midrule
$P_{\mathrm{succ}}$ (quantum) & $0.3333$ & $0.3918$ & $0.4498$ & $0.5643$ & $0.6762$ & $1.0000$ \\
\bottomrule
\end{tabular}
\end{center}

\noindent At $\omega = 0$ every preparation is forced to coincide with the herald, so nothing distinguishes them and Bob is reduced to guessing; as $\omega$ grows the states are allowed to depart from the herald, and hence from each other, and discrimination improves. No relation is declared between $R$, $M$ and $P$ here at all --- unlike the two previous examples, nothing requires them to commute --- which is the point: the four relations of Sec.~\ref{sec:relations} are declared according to what is physically known, never by default, and a constraint on a family can be encoded either through the family's own algebra (idempotency, orthogonality, mutual commutation) or through its recorded moments with an external, fixed reference operator, as here.
 
\subsection{Device-independent randomness certification}
\label{sec:randomness}
 
The same object also certifies randomness, and the construction is worth working through completely rather than gesturing at, since it needs one further idea beyond anything used so far: a \emph{convex decomposition} over an adversary's possible strategies, built by solving several moment problems side by side rather than just one.
 
The setting is the CHSH scenario of Sec.~\ref{sec:chsh}, now viewed from Eve's perspective. Eve holds a system correlated with Alice's and Bob's, and knows the full quantum strategy generating a device-independently observed CHSH value $S$; her task is to guess Alice's outcome $a$ at the fixed setting $x=0$. Her optimal guessing probability, and hence the min-entropy she leaves behind, is itself the solution of an SDP over the very same moment matrix~\cite{pironio2010_rng}, decomposed once per possible guess. Concretely: build $n_A = 2$ independent copies of the moment matrix --- one per value $l$ that Eve might guess --- constrain their \emph{sum} to reproduce the observed physics (normalisation and the observed value of $S$), and maximise the probability that the branch $l$ agrees with the actual outcome:
 
\begin{lstlisting}
import cvxpy as cp
from MoMPy import OperatorSet, MomentProblem
 
ops = OperatorSet()
A = ops.add_povm_family(2, 2)          # A[x][a]
B = ops.add_povm_family(2, 2)          # B[y][b]
flat_a = [m for row in A for m in row]
flat_b = [m for row in B for m in row]
ops.declare_commuting(flat_a, flat_b)
 
monomials  = flat_a + flat_b
monomials += [[a, b] for a in flat_a for b in flat_b]
mm = MomentProblem(monomials, ops.algebra(), dim=1, cyclicity=False).build()
 
def guessing_probability(S_obs):
    branches = [mm.to_cvxpy() for _ in range(2)]         # one branch per guess l in {0, 1}
    ct = []
    for br in branches:
        ct += br.constraints
    ct.append(sum(br.identity for br in branches) == 1.0)
    for x in range(2):
        for br in branches:
            ct += br.apply(mm.normalisation_constraints([A[x][0], A[x][1]]))
    for y in range(2):
        for br in branches:
            ct += br.apply(mm.normalisation_constraints([B[y][0], B[y][1]]))
 
    def corr(x, y):
        return sum((-1)**(a + b) * sum(br[[A[x][a], B[y][b]]] for br in branches)
                   for a in range(2) for b in range(2))
 
    S = corr(0, 0) + corr(1, 0) + corr(0, 1) - corr(1, 1)
    ct.append(S == S_obs)                                # the observed CHSH value
 
    p_guess = sum(branches[l][[A[0][l]]] for l in range(2))   # Eve guesses l, scores if a = l
    cp.Problem(cp.Maximize(p_guess), ct).solve(solver=cp.MOSEK)
    return p_guess.value
\end{lstlisting}
 
\begin{center}
\begin{tabular}{@{}lccccc@{}}
\toprule
observed CHSH value $S$ & $2.0000$ & $2.2000$ & $2.4000$ & $2.6000$ & $2\sqrt2$ \\
\midrule
$p_{\mathrm{guess}}$ & $1.0000$ & $0.9444$ & $0.8742$ & $0.7784$ & $0.5005$ \\
$H_\infty = -\log_2 p_{\mathrm{guess}}$ (bits) & $0.0000$ & $0.0825$ & $0.1940$ & $0.3614$ & $0.9987$ \\
\bottomrule
\end{tabular}
\end{center}
 
\noindent These values agree, to solver precision, with the closed form $p_{\mathrm{guess}} = \tfrac12\big(1 + \sqrt{2 - (S/2)^2}\,\big)$ derived directly (not via this hierarchy) in Ref.~\cite{pironio2010_rng}: at the local bound $S = 2$ a valid local strategy lets Eve guess with certainty, and at Tsirelson's bound $S = 2\sqrt2$ her guess is no better than a coin flip, certifying a full bit of randomness per use.
 
Nothing here is specific to CHSH: the same branch-and-sum construction runs over any moment problem, provided the quantity fixed as ``observed'' is itself expressible as a linear functional of the matrix.
 
\subsection{Deterministic correlations with a fixed ensemble: block moment matrix with $\Theta = \id$}
\label{sec:block-mm}

This section is the sharpest test of the claim made in Sec.~\ref{sec:contribution-summary}, and the reason we place it last. Everything so far has been a scalar relaxation: each entry of $\Gam$ was a number, and the three applications above differed only in which relations were declared and whether cyclicity was imposed. We now change the codomain of $\Theta$ itself --- from $\CC$ to $\CC^{d\times d}$ --- and the construction and the declarations survive unchanged, while the same modelling interface represents scalar and block variables transparently. The optimisation problem downstream does of course change, since its variables become blocks; what does not change is anything upstream of it. If the abstraction of Sec.~\ref{sec:background} had been quietly specialised to scalar moments anywhere, this is where it would break.

The relaxation of some physical problems does not lead to scalar moments at all, but instead keeps an arbitrary block structure, depending on the chosen map $\Theta$. For scenarios where the entries of the matrix are themselves operators rather than numbers --- localising matrices, and hierarchies where a matrix of operators must be positive as an operator --- the same \code{MomentProblem} builds this directly: declaring \code{dim=}$d$ greater than one makes each entry a $d\times d$ block instead of a scalar, while \code{cyclicity} and \code{hermitian} are set, independently and exactly as for any other problem, to whatever the physical construction calls for. The operator-valued moments $\Gamma^\lambda_{u,v}=q(\lambda)\,uv^\dagger$ used below are not traces and are not symmetric under $u\leftrightarrow v$, so both go to \code{False}; \code{to\_cvxpy} then follows \code{dim} automatically (Sec.~\ref{sec:tutorial}). This alternative construction, and its use as a genuine relaxation tool rather than a bookkeeping curiosity, is explored in Ref.~\cite{dalessandro2026}. This section works a complete example built on it, from Ref.~\cite{dalessandro2026pam}.
 
\paragraph{The scenario.} Consider the adversarial prepare-and-measure setting of Ref.~\cite{dalessandro2026pam}: Alice prepares $\rho_x^\lambda$ and sends it to Bob, who measures a known POVM $\{M^\lambda_{b|y}\}_b$; a hidden variable $\lambda$, distributed with probability $q(\lambda)$, may correlate the two. An adversary, Eve, holds $\lambda$ and tries to guess Bob's outcome. The key observation (Observation~1 of Ref.~\cite{dalessandro2026pam}) is that \emph{predictable is classical}: the correlations observed for a specific input pair $(x,y)$ admit a deterministic explanation, consistent with whatever restriction is placed on the $\rho_x^\lambda$, if and only if Eve's optimal guessing probability for that pair equals exactly $1$. Certifying non-classicality is thereby turned into a feasibility question, complementary to the guessing-probability optimisations of Sec.~\ref{sec:randomness}: instead of asking how well Eve can guess, one fixes her success at $1$ and asks whether that is achievable at all.
 
The restriction worked through here is a \emph{fixed ensemble}: only the average preparation $\rho_x = \sum_\lambda q(\lambda)\,\rho_x^\lambda$ is known, not the individual $\rho_x^\lambda$. Because the objects being reasoned about --- the $\rho_x^\lambda$ and the resulting $\Gamma^\lambda_{u,v} = q(\lambda)\, uv^\dagger$ --- are operators rather than scalars, this is exactly a block moment matrix, one per value of $\lambda$: $\Gamma^\lambda \succeq 0$, with $\sum_\lambda \Gamma^\lambda_{\rho_x,\mathds{1}} = \rho_x$ tying the branches to the known average, $p(b|x,y) = \sum_\lambda \Tr\big(\Gamma^\lambda_{\rho_x,M_{b|y}}\big)$ reproducing every observed statistic, and $\sum_\lambda \Tr\big(\Gamma^\lambda_{\rho_x,M_{\lambda_{xy}|y}}\big) = 1$ at the test inputs enforcing Eve's success. \MoMPy{} builds $\Gamma^\lambda$ for each $\lambda$ with the same \code{MomentProblem}, declared with \code{dim=}$d$, \code{cyclicity=False} and \code{hermitian=False}, from the very same declared relations used everywhere else in this paper.
 
\paragraph{A worked instance: two known qubit bases, three known preparations.} Take $d=2$, three preparations $R_0 = \ket0\!\bra0$, $R_1 = \ket{+}\!\bra{+}$ and a third $R_2$ at Bloch angles $(\theta,\phi)$, measured with the $\sigma_z$ and $\sigma_x$ eigenbases ($n_X=3$, $n_Y=n_B=2$). The question asked is a robustness one: for the noisy ensemble $\rho_x = v R_x + (1-v)\,\mathds{1}/2$, how large can the white-noise visibility $v$ be before \emph{every} choice of test round $(x^\ast,y^\ast)$ stops admitting a fixed-ensemble classical explanation? This is the quantity plotted, over the full $(\theta,\phi)$ grid, in Fig.~2a of Ref.~\cite{dalessandro2026pam}.
 
Declaring the operators proceeds exactly as before --- and, because the block and scalar hierarchies share the same reduction rules, both are built from the identical \code{monomials} and \code{algebra}:
 
\begin{lstlisting}
import cvxpy as cp
import numpy as np
from MoMPy import OperatorSet, MomentProblem
 
nX, nY, nB, d = 3, 2, 2, 2
 
ops = OperatorSet()
R = ops.add_family(nX, idempotent=True)      # three known preparations
M = ops.add_povm_family(nY, nB)              # M[y][b]: two known qubit bases
algebra = ops.algebra()
 
monomials  = list(R) + [m for row in M for m in row]
monomials += [[r, m] for r in R for row in M for m in row]
monomials += [[r, s] for r in R for s in R]
 
mm = MomentProblem(monomials, algebra, dim=1).build()                      # scalar, tracial
bm = MomentProblem(monomials, algebra, dim=d,
                    cyclicity=False, hermitian=False).build()              # block-valued, same relations
\end{lstlisting}
 
Those two lines are worth pausing on, because they are the strongest single piece of evidence for the claim of Sec.~\ref{sec:contribution-summary}. \code{mm} and \code{bm} are two \emph{different choices of $\Theta$} --- a tracial scalar functional and the identity map onto $d\times d$ blocks --- instantiated from the identical \code{monomials} list and the identical \code{algebra} object, differing only in the flags passed to the constructor. Because they are built from the same monomial list, position $(r,c)$ names the same word in both; only the \emph{type} attached to it differs, a real number in \code{mm}, a $d\times d$ operator in \code{bm}. This is what allows the two to be placed in one optimisation problem and constrained \emph{against each other}, which the code below does by tying $\Tr(\Gamma_{u,v}) = G_{u,v}$ at every position.

It is worth being precise about what that coupling does and does not buy, since the obvious reading of it is wrong. It does \emph{not} add positivity: if $\Gamma \succeq 0$ then for any $z \in \CC^{n+1}$ and any basis $\{e_k\}$ of $\CC^d$ we have $z^\dagger G z = \sum_k (z\otimes e_k)^\dagger \Gamma (z \otimes e_k) \ge 0$, so the scalar matrix of block traces is automatically positive semidefinite and imposing $G \succeq 0$ separately is redundant. What the coupling does add is the \emph{tracial identifications}. The scalar matrix is built with \code{cyclicity=True} and the block matrix with \code{cyclicity=False}, so $G$ merges words that $\Gamma$ keeps apart; equating traces position by position pushes those merges onto the blocks, as equalities between block classes that the block build cannot see on its own. They are legitimate because $\Tr\Gamma^\lambda_{u,v} = q(\lambda)\Tr(uv^\dagger)$ genuinely is a tracial moment, whatever the blocks themselves are. For the instance built here the effect is not marginal: both matrices are $29\times29$, the scalar build has $119$ variable classes against the block build's $431$, and $110$ scalar classes each bind more than one distinct block class, so the coupling imposes $312$ independent trace equalities. None of this is needed for feasibility in the bare construction of Ref.~\cite{dalessandro2026pam}; we include it because it is free and tightens the relaxation, and because expressing it required only that the two builds already agree, class for class, on what each position means --- something a user of a package in which the moment map is fixed by the choice of class would have to establish by hand before writing a single constraint.
 
\code{to\_cvxpy} reads \code{.dim} off whichever matrix it is given: called on \code{mm} it allocates one \emph{scalar} CVXPY variable per class; called on \code{bm} it allocates one $d\times d$ block per class instead, and returns the same kind of model either way --- \code{.constraints}, indexing by monomial or index, \code{.apply()} for turning a \code{LinearConstraint} into a CVXPY one --- with a block in place of every scalar.
 
\warnbox{A natural first instinct is to declare every block \code{cp.Variable((d, d), hermitian=True)}, by analogy with a real scalar moment. That is wrong for any \emph{off-diagonal} class: $\Gamma_{u,v}=uv^\dagger$ is Hermitian only when $u=v$, so constraining it to equal its own adjoint when $u\neq v$ silently discards part of the feasible region --- and, because \code{bm} was built with \code{hermitian=False}, the classes at $(r,c)$ and $(c,r)$ are almost always genuinely different classes to begin with, not the same one wearing two names. \code{to\_cvxpy} leaves every block a general complex matrix instead, and imposes $\Gamma+\Gamma^\dagger\succeq0$ on the assembled matrix as a whole --- the same convexification it already uses for a scalar matrix built with \code{hermitian=False}, generalised from numbers to matrices, and correct regardless of whether $(r,c)$ and $(c,r)$ happen to share a class.}
 
\begin{lstlisting}
def critical_visibility(rho, meas, xstar, ystar):
    """Largest v for which Eve can guess Bob's outcome at (xstar, ystar) for
    certain, given the fixed noisy ensemble {rho[x]} and known measurement."""
    branches_B = [bm.to_cvxpy() for _ in range(nB)]  # one branch per guess l; dim=d from bm
    branches_G = [mm.to_cvxpy() for _ in range(nB)]
 
    v = cp.Variable(nonneg=True)
    q = cp.Variable(nB, nonneg=True)                       # Pr(Eve's branch = l)
 
    branches = branches_B + branches_G
    ct = [c for br in branches for c in br.constraints]
    ct += [v <= 1, cp.sum(q) == 1]
 
    for l in range(nB):                                    # tie block traces to scalars
        for r in range(mm.n):
            for c in range(mm.n):
                ct.append(cp.trace(branches_B[l][bm.matrix[r, c]]) == branches_G[l][mm.matrix[r, c]])
 
    for l in range(nB):
        ct.append(branches_B[l][bm.identity_index] == q[l] * np.eye(d))
        for x in range(nX):
            ct.append(cp.real(cp.trace(branches_B[l][bm.index_of([R[x]])])) == q[l])
 
    for x in range(nX):                                    # branches average to the noisy ensemble
        ct.append(sum(branches_B[l][bm.index_of([R[x]])] for l in range(nB))
                  == v * rho[x] + (1 - v) * np.eye(d) / d)
 
    for x in range(nX):                                    # ... and reproduce every observed statistic
        for y in range(nY):
            for b in range(nB):
                observed = cp.real(sum(cp.trace(branches_B[l][bm.index_of([R[x], M[y][b]])])
                                        for l in range(nB)))
                target = cp.real(cp.trace((v * rho[x] + (1 - v) * np.eye(d) / d) @ meas[y][b]))
                ct.append(observed == target)
 
    p_guess = sum(cp.real(cp.trace(branches_B[l][bm.index_of([R[xstar], M[ystar][l]])]))
                  for l in range(nB))
    ct.append(p_guess == 1)                                # Eve succeeds with certainty
 
    cp.Problem(cp.Maximize(v), ct).solve(solver=cp.SCS)
    return v.value
\end{lstlisting}
 
The critical visibility reported for a given $R_2$ is the smallest \code{critical\_visibility} over every test pair $(x^\ast, y^\ast)$ --- the noise level at which even the \emph{most forgiving} test round stops admitting a classical explanation:
 
\begin{center}
\begin{tabular}{@{}lcccc@{}}
\toprule
$(\theta,\phi)$ & $(0,0)$ & $(\pi/4,\pi/4)$ & $(\pi/2,\pi/2)$ & $(\pi/2,0)$ \\
\midrule
$R_2$ & $R_0$ & \emph{generic} & $\tfrac12(\mathds1+\sigma_y)$ & $R_1$ \\
$v^\ast$ & $0.9017$ & $0.8708$ & $0.8123$ & $0.9017$ \\
\bottomrule
\end{tabular}
\end{center}
 
\noindent Two checks confirm this is behaving correctly before trusting the numbers in between. At $(\theta,\phi)=(0,0)$ and $(\pi/2,0)$, $R_2$ collapses onto $R_0$ or $R_1$ respectively, so the three-preparation ensemble degenerates to a two-preparation one, and indeed both give the same $v^\ast=0.9017$: the two degenerate points agree with each other exactly, as they must. At $(\theta,\phi)=(\pi/2,\pi/2)$, $R_2 = \tfrac12(\mathds1+\sigma_y)$ is mutually unbiased with respect to \emph{both} $R_0$ and $R_1$, and the bound tightens to $v^\ast = 0.8123$: a third preparation genuinely independent of the first two gives the adversary strictly less room, exactly the qualitative shape of Fig.~2a of Ref.~\cite{dalessandro2026pam}, whose full $(\theta,\phi)$ heatmap this reproduces at four points rather than on a fine grid --- a resolution chosen here to stay in the interactive regime discussed in Sec.~\ref{sec:conclusions}, not a limitation of the construction itself. 
 
\section{Performance}
\label{sec:performance}

Table~\ref{tab:scaling} reports build times across eight scenarios chosen to stress different features of the algebra, not just raw scale. \emph{NPA (bipartite)} is the standard Bell scenario of Sec.~\ref{sec:chsh}, with both Alice and Bob given 5 measurement settings of 3 outcomes each (15 operators per side), built at level $1+AB$. \emph{PAM (plain)} is the prepare-and-measure problem of Sec.~\ref{sec:pam} with 14 preparations $\rho_x$ and a single 14-outcome guessing measurement, states declared mutually commuting, at third order. \emph{PAM (commuting states)} keeps those same 14 preparations but adds a genuine measurement family of 2 settings with 2 outcomes each --- the relation that makes $\{\rho_x\}$ behave as a proper ensemble rather than an arbitrary set of operators. \emph{NPA (tripartite)} extends the Bell scenario to three parties with 4 measurement settings and 4 outcomes apiece (16 operators per party), built from cross-party pairs only ($1+AB+AC+BC$). \emph{NPA (hybrid outcomes)} is a bipartite Bell scenario built from individual \code{add\_povm} calls rather than a single \code{add\_povm\_family}: Alice has 5 settings with 2, 3, 4, 5 and 6 outcomes respectively (20 operators), Bob has 7 settings with 2 through 8 outcomes respectively (35 operators). \emph{PAM (POVM preparation)} replaces PAM's plain family of pure states with a genuinely untrusted preparation device: a steering-type assemblage of 6 inputs and 6 outcomes (36 general POVM elements, declared neither idempotent nor orthogonal), measured by Bob through a single trusted 8-outcome guessing POVM, at second order. \emph{PAM (commuting settings)} takes 12 preparations and a measurement family of 4 settings with 2 outcomes each, and declares two pairs of Bob's settings pairwise commuting (settings 0--1, and settings 2--3), leaving the remaining cross-pairs incompatible. \emph{Network (two channels)} is a small network scenario with three independent 24-outcome POVMs, $A$, $B$ and $C$: $A$ and $B$ --- two independently-prepared subsystems --- are declared to commute with each other but not with $C$, the central node's joint measurement, as at the joint measurement station of an entanglement-swapping protocol. Time per matrix entry stays roughly flat as the problem grows. The data are consistent with the analysis of Sec.~\ref{sec:algorithm}: across these configurations, build cost tracks the number of distinct reachable words considerably more closely than it tracks the number of matrix positions. We report this as an empirical observation on eight instances, not as a measurement of the asymptotic bound itself.

\begin{table}[h]
\centering
\footnotesize
\begin{tabular}{@{}llrrrr@{}}
\toprule
scenario & level & matrix & vars & words & $t$ (s) \\
\midrule
NPA (bipartite) & 1+AB & $256\times256$ & $19\,337$ & $324\,931$ & $1.461$ \\
PAM (plain) & 3rd order & $3165\times3165$ & $27\,071$ & $11\,122\,245$ & $98.044$ \\
PAM (commuting states) & 3rd order & $3015\times3015$ & $12\,785$ & $8\,936\,655$ & $82.191$ \\
NPA (tripartite) & 2nd order & $817\times817$ & $143\,442$ & $3\,639\,601$ & $23.374$ \\
NPA (hybrid outcomes) & 1+AB & $756\times756$ & $175\,478$ & $3\,058\,581$ & $17.418$ \\
PAM (POVM preparation) & 2nd order & $1629\times1629$ & $428\,683$ & $3\,591\,037$ & $23.930$ \\
PAM (commuting settings) & 3rd order & $1989\times1989$ & $10\,759$ & $4\,185\,061$ & $29.910$ \\
Network (two channels) & 2nd order & $1801\times1801$ & $707\,402$ & $6\,226\,057$ & $42.380$ \\
\bottomrule
\end{tabular}
\caption{Build times for the moment matrix, one representative configuration per scenario family. ``Words'' counts distinct monomials reached during closure.}
\label{tab:scaling}
\end{table}

 
\section{Conclusions}
\label{sec:conclusions}
 
Moment matrix construction is a step that every practitioner of SDP hierarchies must perform and that few wish to spend time on. The claim we have argued here is that this step is not merely tedious but \emph{shared}: that the constructions used for Bell nonlocality, for prepare-and-measure scenarios, for randomness certification and for operator-valued generalisations differ in which relations are declared, and almost nowhere else. \MoMPy{} is what that claim looks like when it is taken literally and implemented. The same four relations of Sec.~\ref{sec:relations}, recombined rather than extended, reproduced Tsirelson's bound (Sec.~\ref{sec:chsh}) and, from a block moment matrix whose moment map is the partial trace over the untrusted wing, both the algebraic and the unsteerable bound of a linear steering criterion (Sec.~\ref{sec:compat}); a tripartite Mermin bound built from scratch by a stated procedure (Sec.~\ref{sec:recipe}); a dimension witness and two physically distinct constraints on prepare-and-measure discrimination (Sec.~\ref{sec:pam}); the device-independent guessing-probability certificate for CHSH at fixed observed $S$ (Sec.~\ref{sec:randomness}); and the certification of deterministic correlations from known quantum ensembles (Sec.~\ref{sec:block-mm}). None of these applications needed new code in the package --- only a different combination of idempotency, orthogonality, commutation and reality, chosen to match the physics, as Sec.~\ref{sec:cookbook} makes procedural. The block-valued case is the one we would point a sceptical reader at first: it belongs to a hierarchy introduced only recently~\cite{dalessandro2026}, its moments are operators rather than scalars, and it is nevertheless built from the same monomial list, the same declared algebra and the same reduction rules as the scalar relaxation sitting beside it in the same listing.
 
 
It is worth being equally explicit about what \MoMPy{} does \emph{not} attempt, since generality of expression and raw throughput are different goals and we have chosen the first. \MoMPy{} has one dependency, parses no symbolic objectives, and is not the fastest available way to build a standard NPA hierarchy: for that specific task a compiled implementation such as \code{Moment}~\cite{Moment2024} is expected to be faster at comparable scale, and the packages surveyed in Sec.~\ref{sec:related-software} each do the particular thing they were built for better than \MoMPy{} attempts to. What we have not found in any of them --- verified from documentation (Table~\ref{tab:comparison}) --- is a single object that is not committed in advance to one choice of moment map. The capabilities themselves are not the scarce thing; their \emph{co-location} is. \code{Ncpol2sdpa} reaches operator-valued moments, but through a class specialised to a different hierarchy, and exposes no cyclicity option from either. That is the gap this work addresses, and the applications above are the argument that the gap is real rather than notional: each is a relaxation someone would otherwise write bespoke code for, each reduces here to a different set of declarations over the same engine, and Sec.~\ref{sec:block-mm} goes further by putting two different moment maps into one program and constraining them against each other, which no interface we examined exposes directly. Because the engine is small, dependency-free, documented argument by argument (App.~\ref{app:api}) and checked against an independent brute-force implementation on randomised scenarios (Sec.~\ref{sec:validation}), a user can also audit the identifications their bound depends on instead of trusting them --- which matters more than speed in a setting where, as Sec.~\ref{sec:tracial-vs-state} shows, an incorrect identification does not announce itself.
 
 
Three directions follow naturally. The first is the fully general block-matrix hierarchy of Ref.~\cite{dalessandro2026}: \MoMPy{} builds its block-valued moment matrices, but assembling them into the complete construction is currently left to the user, and closing that gap inside the same declarative interface is, on present evidence, the most natural way for this package to grow. The best current solution for block-matrix hierarchies is Ref.~\cite{dalessandro2026blockmatrixhierarchy} written in Julia. The second is localising matrices. Positivity of an individual operator is not something the moment matrix sees. Localising matrices are the standard remedy and would fit the existing construction without disturbing it. The third is symmetry reduction: \MoMPy{} ships the numerical block-diagonalisation used for it (App.~\ref{app:blkdiag}) but has no automatic pass that finds a symmetry group and quotients by it, which for the relaxations of Refs.~\cite{Rosset2018,IoannouRosset2021} is where the largest remaining savings are. None of the three requires changing the abstraction of Sec.~\ref{sec:background}, which is the strongest evidence we can offer that it was drawn in the right place.

\section*{Acknowledgements}

I began developing \MoMPy{} while I was a postdoc in Lund University, Sweden. This document has been written during my postdoctoral fellowship in ICFO, Barcelona, Spain. This project has received funding from the European Union’s Horizon 2020 research and innovation programme under the Marie Skłodowska-Curie grant agreement No 101262877, Project TPSQCrypto. Views and opinions expressed are however those of the author(s) only and do not necessarily reflect those of the European Union or European Research Executive Agency. Neither the European Union nor the granting authority can be held responsible for them.

\section*{Code and data availability}
\label{sec:availability}

\MoMPy{} is free and open-source software, released under the MIT licence and available at \url{https://github.com/chalswater/MoMPy}, with releases on PyPI (\code{pip install MoMPy}). The version documented here is 1.1.0. The package requires only NumPy; CVXPY is an optional dependency, imported lazily and needed only for the modelling layer of App.~\ref{app:cvxpy}. Every code listing in this paper is runnable as printed and is part of the package's test suite, as are the validation sweeps of Sec.~\ref{sec:validation}; the benchmark configurations of Table~\ref{tab:scaling} are reproduced by the \code{benchmarks/} script in the repository. No experimental data underlies this work. Bug reports, feature requests and contributions are welcome, particularly from users who encounter a physically meaningful combination of relations the current interface cannot express.


\bibliographystyle{quantum}
\bibliography{bibliography_quantum}
 
\newpage
 
\appendix
 
\begin{sloppypar}
\section{Complete interface reference}
\label{app:api}
 
\MoMPy{} exposes rather more than the handful of names used in the main text
above. This appendix documents \emph{every} public function, class and method
the package defines, module by module: Sec.~\ref{app:quickref} is a flat
index, and Secs.~\ref{app:algebra}--\ref{app:blkdiag} give the full
argument-by-argument detail. Names prefixed with an underscore (e.g.\
\code{\_finalise}, \code{\_sites}) are internal implementation details and are
omitted throughout --- they are not part of the public interface and may
change without notice.
 
\subsection{Quick reference}
\label{app:quickref}
 
\begin{center}
\small
\begin{tabular}{@{}ll@{}}
\toprule
\textbf{Describing operators and relations (App.~\ref{app:algebra}--\ref{app:monomials})} & \\
\midrule
\code{OperatorSet()} & allocates labels and records properties \\
\quad \code{.add(idempotent=False)} & one new label \\
\quad \code{.add\_family(n, idempotent=)} & $n$ independent labels \\
\quad \code{.add\_povm(n, idempotent=, orthogonal=)} & one measurement's outcomes \\
\quad \code{.add\_povm\_family(ny, nb)} & \code{M[y][b]} \\
\quad \code{.add\_tensor(*shape, idempotent=)} & a nested array of labels \\
\quad \code{.declare\_idempotent/\_orthogonal(labels)} & relations declared after the fact \\
\quad \code{.declare\_commuting(A, B)} & every $a\in A$ commutes with every $b\in B$ \\
\quad \code{.algebra()} & bundle into an \code{Algebra} \\
\code{Algebra(idempotents, orthogonal\_sets,} & the relations, written directly \\
\quad \code{commuting\_pairs)} & \\
\code{generate\_monomials(letters, level)} & all words up to a length \\
\code{Word}, \code{as\_word}, \code{as\_words} & the tuple representation of a monomial \\
\code{IDENTITY\_LABEL} & the reserved label ($=0$) for $\id$ \\
\bottomrule
\end{tabular}
\end{center}
 
\begin{center}
\small
\begin{tabular}{@{}ll@{}}
\toprule
\textbf{Building a relaxation (App.~\ref{app:problem})} & \\
\midrule
\code{MomentProblem(mons, algebra, *, dim)} & one class; \code{dim} is mandatory \\
\quad \code{cyclicity=True} (default) & $\Tr(u v^\dagger)$, cyclic (tracial) \\
\quad \code{cyclicity=False} & $\bra\psi u v^\dagger\ket\psi$, non-cyclic (state/NPA) \\
\quad \code{dim} $>1$ & $d\times d$ operator-valued entries \\
\code{.from\_levels(letters, level, extra=)} & shortcut constructor \\
\code{.build(progress=False)} & returns a \code{MomentMatrix} \\
\bottomrule
\end{tabular}
\end{center}
 
\begin{center}
\small
\begin{tabular}{@{}ll@{}}
\toprule
\textbf{Using the result (App.~\ref{app:matrix})} & \\
\midrule
\code{.matrix} & $(n,n)$ integer array of variable indices \\
\code{.index\_of(w)}, \code{.get(w)} & monomial $\to$ index \\
\code{.identity\_index}, \code{.zero\_index} & reserved classes \\
\code{.variable\_indices}, \code{.n\_variables} & the distinct variables \\
\code{.has\_zeros} & whether any entry was forced to zero \\
\code{.equivalents(w)} & all monomials sharing $w$'s variable \\
\code{.word\_at(r, c)}, \code{.words} & explicit words \\
\code{.summary()}, \code{.stats} & diagnostics \\
\code{.to\_legacy()} & the legacy five-tuple \\
\code{MapTable}, \code{UnknownMonomial} & the lookup table, and its lookup-miss error \\
\bottomrule
\end{tabular}
\end{center}
 
\begin{center}
\small
\begin{tabular}{@{}ll@{}}
\toprule
\textbf{Constraints (App.~\ref{app:constraints})} & \\
\midrule
\code{.normalisation\_constraints(povm)} & $\sum_b M_b = \id$ \\
\code{.marginal\_constraints(joint, marg)} & $\sum_i G_i = M$ \\
\code{LinearConstraint} & \code{.lhs}, \code{.rhs}, \code{.apply(vars)} \\
\bottomrule
\end{tabular}
\end{center}
 
\begin{center}
\small
\begin{tabular}{@{}ll@{}}
\toprule
\textbf{Handing it to a solver (App.~\ref{app:cvxpy})} & \\
\midrule
\code{.to\_cvxpy(dim=, psd=, complex=, normalise\_identity=)} & CVXPY model \\
\code{CvxpyModel} & \code{.G}, \code{.vector}, \code{.constraints}, \code{[\,]} \\
\bottomrule
\end{tabular}
\end{center}
 
\begin{center}
\small
\begin{tabular}{@{}ll@{}}
\toprule
\textbf{Rewriting-engine internals, Sec.~\ref{sec:algorithm} (App.~\ref{app:internals})} & \\
\midrule
\code{Rewriter}, \code{Classifier}, \code{UnionFind} & the moves, the search, the disjoint-set forest \\
\code{FMAP\_ERROR} & legacy lookup-miss sentinel string \\
\bottomrule
\end{tabular}
\end{center}
 
\begin{center}
\small
\begin{tabular}{@{}ll@{}}
\toprule
\textbf{Utilities and metadata (App.~\ref{app:blkdiag})} & \\
\midrule
\code{MoMPy.blkdiag.blkdiag}, \code{.M\_ones} & numerical block-diagonalisation \\
\code{MoMPy.\_\_version\_\_} & installed package version string \\
\bottomrule
\end{tabular}
\end{center}
 
\subsection{Words and operator algebras --- \texttt{MoMPy.algebra}}
\label{app:algebra}
 
\begin{description}[leftmargin=1.4em,style=nextline]
\item[\code{Word}]
  Type alias, \code{Word = tuple}: a word is simply a tuple of integer labels,
  read left to right as an operator product.
\item[\code{IDENTITY\_LABEL}]
  The integer \code{0}, reserved for the identity operator $\id$. Never
  allocated by \code{OperatorSet}; added to a moment matrix automatically as
  row/column $0$.
\item[\code{as\_word(obj)}]
  Coerce one monomial to a word. A bare integer label (\code{3}) becomes
  \code{(3,)}; any other iterable of labels becomes a tuple unchanged.
\item[\code{as\_words(objs)}]
  Apply \code{as\_word} to every element of an iterable of monomials; returns a
  \code{list} of words.
\item[\code{Algebra(idempotents=(), orthogonal\_sets=(), commuting\_pairs=())}]
  An immutable bundle of the structural relations obeyed by a set of
  operators. \code{idempotents}: an iterable of labels $P$ with $P^2=P$.
  \code{orthogonal\_sets}: an iterable of label groups, each mutually
  orthogonal ($P_iP_j=0$ for $i\neq j$ within a group); membership does not by
  itself imply idempotency --- list a label in \code{idempotents} too if it is
  meant to be a projector. \code{commuting\_pairs}: an iterable of $(A,B)$
  pairs of label collections, meaning every $a\in A$ commutes with every
  $b\in B$; pass $(A,A)$ to declare all of $A$ mutually commuting. Exposes
  O(1) lookup tables consumed by the rewriting engine (Sec.~\ref{sec:algorithm}).
\item[\code{.is\_trivial}]
  Property. \code{True} iff no relation at all was declared (the free
  algebra).
\item[\code{.is\_idempotent(label)}]
  \code{True} iff \code{label} was declared idempotent.
\item[\code{.are\_orthogonal(a, b)}]
  \code{True} iff \code{a} and \code{b} are distinct and belong to a common
  declared orthogonal set.
\item[\code{.commute(a, b)}]
  \code{True} iff \code{a == b}, or the pair was declared (directly or via a
  group) to commute.
\item[\code{.ortho\_table}, \code{.commute\_table}]
  Properties. The internal \code{\{label: frozenset of partners\}} dictionaries
  that back \code{.are\_orthogonal} and \code{.commute} in O(1); exposed for
  advanced use, e.g.\ writing a custom rewriter.
\item[\code{.with\_(**changes)}]
  Return a copy of the \code{Algebra} with one or more of
  \code{idempotents}/\code{orthogonal\_sets}/\code{commuting\_pairs} replaced;
  fields not named in \code{changes} are copied from \code{self}.
\end{description}
Two \code{Algebra} instances compare equal, and hash equally, when their
idempotents, orthogonal sets (as sets) and derived commutation table agree, so
an \code{Algebra} can be used as a dictionary key or memoised on; it also
implements a readable \code{repr()} listing all three declarations, sorted.
 
\subsection{Declaring operators --- \texttt{MoMPy.monomials}}
\label{app:monomials}
 
\begin{description}[leftmargin=1.4em,style=nextline]
\item[\code{generate\_monomials(letters, level=1, *, include\_identity=False)}]
  All words over \code{letters} of length $1,\dots,\code{level}$.
  \code{letters}: the single-operator labels to combine. \code{level}:
  maximum word length ($\ge 1$, else \code{ValueError}); NPA ``level 1'' is
  \code{level=1}, and so on. \code{include\_identity}: prepend the identity
  word \code{(0,)}; off by default because \code{.build()} adds the identity
  row/column itself. Returns a \code{list} of tuples, e.g.\
  \code{generate\_monomials([1,2],2)} $=$
  \code{[(1,), (2,), (1,1), (1,2), (2,1), (2,2)]}.
\item[\code{OperatorSet(start=1)}]
  Allocates operator labels and accumulates their declared properties, so an
  \code{Algebra} can be produced in one call at the end. \code{start}: the
  first label to allocate (must be $>0$; $0$ is reserved for $\id$).
\item[\code{.labels}]
  Attribute: every label allocated so far, in allocation order.
\item[\code{.add(*, idempotent=False)}]
  Allocate and return one new label.
\item[\code{.add\_family(count, *, idempotent=False)}]
  Allocate \code{count} independent labels in one call (e.g.\ a family of
  states); returns a \code{list}.
\item[\code{.add\_povm(n\_outcomes, *, idempotent=True, orthogonal=True)}]
  Allocate one measurement's \code{n\_outcomes} labels. By default they are
  registered as \emph{both} idempotent and mutually orthogonal --- the usual
  projective-measurement assumption; pass \code{idempotent=False} and/or
  \code{orthogonal=False} for a general POVM. Returns a \code{list} of
  \code{n\_outcomes} labels.
\item[\code{.add\_povm\_family(n\_settings, n\_outcomes, **kwargs)}]
  Allocate \code{n\_settings} independent measurements via
  \code{.add\_povm(n\_outcomes, **kwargs)} each; returns a nested list
  \code{M[setting][outcome]}.
\item[\code{.add\_tensor(*shape, idempotent=False)}]
  Allocate a nested list of labels of arbitrary shape, e.g.\
  \code{add\_tensor(2,3)} returns a $2\times3$ nested list of fresh labels. At
  least one dimension must be given and every dimension must be positive,
  else \code{ValueError}.
\item[\code{.declare\_idempotent(labels)}]
  Mark an iterable of labels as idempotent, in place (no return value).
\item[\code{.declare\_orthogonal(labels)}]
  Register an iterable of labels as one mutually-orthogonal group, in place.
\item[\code{.declare\_commuting(a, b)}]
  Record that every label in \code{a} commutes with every label in \code{b},
  in place.
\item[\code{.algebra()}]
  Bundle every declaration made so far into an \code{Algebra}.
\item[\code{len(ops)}, \code{iter(ops)}]
  Number of labels allocated, and iteration over them in allocation order.
\end{description}
\code{OperatorSet} also implements a readable \code{repr()}, e.g.\
\code{OperatorSet(7 operators, next label 8)}.
 
\subsection{The problem class --- \texttt{MoMPy.problem}}
\label{app:problem}
 
\begin{description}[leftmargin=1.4em,style=nextline]
\item[\code{MomentProblem(monomials, algebra=None, *, dim, cyclicity=True, hermitian=True, dedupe=True)}]
  The one relaxation-building class; every distinction that used to require a
  different class name is now a keyword on this constructor.
  \code{monomials}: the generating monomials (bare labels or sequences of
  labels); do \emph{not} include the identity, it is added automatically.
  \code{algebra}: the relations; \code{None} means the free algebra (no
  relations). \code{dim}: side length of the block each entry becomes once
  the matrix reaches \code{to\_cvxpy} (App.~\ref{app:cvxpy}); \code{dim=1} is
  an ordinary scalar moment matrix, \code{dim=}$d>1$ makes every entry a
  $d\times d$ operator-valued block. \textbf{Mandatory, with no default} ---
  declare it even when it is $1$. \code{cyclicity}: identify a word with its
  cyclic rotations, i.e.\ treat each entry as a trace, $\Tr(uv)=\Tr(vu)$,
  rather than an operator product; \code{True} by default, matching the
  tracial construction the plain \code{MomentProblem} of earlier releases always used.
  Leave it \code{False} for moments taken in a fixed external state (NPA,
  Bell scenarios) and for any block hierarchy (\code{dim}$\,>1$), neither of
  which satisfies $\Tr(uv)=\Tr(vu)$ (Sec.~\ref{sec:tracial-vs-state}).
  \code{hermitian}: identify each word with its reversal (the
  $\mathrm{Re}\Tr$ convention); \code{True} by default. \code{dedupe}: drop
  repeated monomials before building, since duplicates only add linearly
  dependent rows/columns; default \code{True}.
\item[\code{.n}]
  Property: the side length the built matrix will have, $=\code{len(monomials)}+1$
  after deduplication.
\item[\code{.build(*, progress=False, progress\_stream=None)}]
  Runs the classification sweep (Sec.~\ref{sec:algorithm}) and returns a
  \code{MomentMatrix}. \code{progress=True} prints a live carriage-return
  percentage line to \code{progress\_stream} (default \code{sys.stderr}).
\item[\code{.from\_levels(letters, level=1, *, extra=None, algebra=None, **kwargs)}]
  Classmethod. Builds the monomial list via
  \code{generate\_monomials(letters, level)} and appends \code{extra} (further
  monomials of any length --- the usual way to add a partial next level, e.g.\
  NPA ``$1+AB$''); remaining keyword arguments, \emph{including the mandatory}
  \code{dim}, are forwarded to the constructor.
\end{description}
\code{MomentProblem} also implements a readable \code{repr()}, e.g.\
\code{MomentProblem(46 monomials, cyclicity=True, hermitian=True, dim=1)}.

\subsection{The built result --- \texttt{MoMPy.matrix}}
\label{app:matrix}
 
\begin{description}[leftmargin=1.4em,style=nextline]
\item[\code{FMAP\_ERROR}]
  The exact string the legacy \code{fmap} (App.~\ref{app:legacy}) returns on a
  lookup miss, \code{'ERROR: The value does not appear in the mapping rule'};
  kept verbatim because old scripts compare against it literally.
\item[\code{UnknownMonomial}]
  Exception (subclasses \code{KeyError}), raised by \code{.index\_of(...)} when
  a monomial does not occur anywhere in the built hierarchy.
\item[\code{MapTable}]
  Subclasses \code{list}; maps monomials to SDP variable indices while still
  behaving as the legacy list of \code{[members, index]} rows under
  integer/slice indexing, iteration and slicing.
\item[\code{.index\_of(monomial)}]
  The SDP variable index of \code{monomial} (coerced via \code{as\_word});
  raises \code{UnknownMonomial} if absent.
\item[\code{.get(monomial, default=None)}]
  Like \code{.index\_of} but returns \code{default} instead of raising.
\item[\code{monomial in table}]
  \code{True} iff \code{monomial} (as a word) is known to the table.
\item[\code{table[key]}]
  Dual behaviour: an \code{int}/\code{slice} key keeps list semantics
  (\code{table[-1][1]}, etc.); any other key (e.g.\ a list of labels) is
  treated as a monomial and returns \code{.index\_of(key)}. So
  \code{table[[1,2]]} and \code{table.index\_of([1,2])} are the same call.
\item[\code{table(monomial)}]
  Calling the table directly is the same as \code{.index\_of(monomial)}.
\item[\code{.zero\_index}]
  Property: the index of the class of monomials forced to vanish.
\item[\code{.n\_variables}]
  Property: number of rows, i.e.\ the number of distinct SDP variables.
\item[\code{.words()}]
  An iterable over every word known to the table.
\item[\code{.members(index)}]
  List of every monomial sharing SDP variable \code{index}.
\item[\code{MomentMatrix}]
  A built moment matrix and everything needed to use it in an SDP. Normally
  obtained from \code{Problem.build()}, not constructed directly.
\item[\code{.matrix}]
  Attribute: the \code{(n,n)} integer NumPy array; entry \code{[r,c]} is the
  SDP variable index associated with the moment $\Theta(u_ru_c^\dagger)$ of
  Eq.~\eqref{eq:gamma}, whichever map $\Theta$ the matrix was built for
  (row/column $0$ is the identity). In the tracial case this specialises to
  $\Tr(u_ru_c^\dagger)$, in the state case to
  $\bra{\psi}u_ru_c^\dagger\ket{\psi}$, and for \code{dim}$\,>1$ the index
  labels a $d\times d$ block rather than a scalar.
\item[\code{.monomials}, \code{.map\_table}, \code{.algebra}, \code{.cyclicity}, \code{.hermitian}, \code{.dim}]
  Attributes: the generating monomials (excluding the identity), the
  \code{MapTable}, and the inputs the matrix was built with --- \code{.dim}
  is the block side length \code{to\_cvxpy} will use unless overridden.
\item[\code{.word\_at(r, c)}]
  The explicit operator word behind \code{matrix[r,c]}, computed on demand;
  raises \code{IndexError} out of range.
\item[\code{.words}]
  Property: the full $n\times n$ nested list of explicit words, built lazily
  and cached on first access (the single largest object in a big build; most
  callers never need it).
\item[\code{len(mm)}, \code{.n}, \code{.shape}]
  The matrix side length (twice over, for convenience) and \code{.matrix.shape}.
\item[\code{.zero\_index}, \code{.identity\_index}]
  Properties: the reserved classes, i.e.\ \code{map\_table.zero\_index} and the
  variable index of $\Tr(\id)$.
\item[\code{.variable\_indices}, \code{.n\_variables}]
  Properties: the sorted array of variable indices actually occurring in the
  matrix, and its size.
\item[\code{.has\_zeros}]
  Property: whether any entry equals \code{.zero\_index}.
\item[\code{.stats}]
  Property: build diagnostics dictionary --- \code{build\_seconds},
  \code{distinct\_words}, \code{words\_expanded}, \code{n\_classes}.
\item[\code{.index\_of(monomial)}, \code{.get(monomial, default=None)}]
  Delegate to the same methods on \code{.map\_table}.
\item[\code{mm[key]}]
  Dual behaviour: a 2-tuple of plain integers is a \emph{matrix position}
  (\code{mm[0,3]} $\to$ \code{int(matrix[0,3])}); any other key (e.g.\ a list of
  labels) is a monomial, resolved via \code{.index\_of}.
\item[\code{monomial in mm}]
  \code{True} iff \code{monomial} is known to \code{.map\_table}.
\item[\code{.equivalents(monomial)}]
  Every monomial known to share \code{monomial}'s SDP variable.
\item[\code{.normalisation\_constraints(povm, *, dedupe=True)}]
  Convenience wrapper for
  \code{MoMPy.constraints.normalisation\_constraints(self, povm, dedupe=dedupe)}
  (App.~\ref{app:constraints}).
\item[\code{.marginal\_constraints(joint, marginal, *, dedupe=True)}]
  Convenience wrapper for
  \code{MoMPy.constraints.marginal\_constraints(self, joint, marginal, dedupe=dedupe)}.
\item[\code{.to\_cvxpy(**kwargs)}]
  Convenience wrapper for \code{MoMPy.cvxpy\_tools.to\_cvxpy(self, **kwargs)}
  (App.~\ref{app:cvxpy}); CVXPY is imported lazily, so it is never required
  just to import \MoMPy.
\item[\code{.to\_legacy()}]
  Returns the legacy five-tuple \code{(matrix, map\_table, monomials,}
  \code{variable\_indices, words)}, for code that mixes old and new styles.
\item[\code{.summary()}]
  A short human-readable report: size, variable count, compression ratio,
  zero-entry count, distinct words seen, and build time if available.
\end{description}
\code{MomentMatrix} also implements a readable \code{repr()}, e.g.\
\code{<MomentMatrix 47x47, 69 variables, dim=1>}, distinct from the fuller
\code{.summary()} above.
 
\subsection{Constraints --- \texttt{MoMPy.constraints}}
\label{app:constraints}
 
\begin{description}[leftmargin=1.4em,style=nextline]
\item[\code{LinearConstraint(lhs, rhs, words=())}]
  Represents $\sum(\text{variables at \code{lhs}}) = \text{variable at
  \code{rhs}}$. \code{lhs}: sequence of integer variable indices. \code{rhs}:
  a single integer variable index. \code{words}: the monomials the constraint
  was derived from, kept only for debugging.
\item[\code{.key}]
  Property: the canonical \code{(sorted(lhs), rhs)} form used to deduplicate
  constraints that say the same thing in a different order.
\item[\code{.is\_trivial()}]
  \code{True} iff the constraint is a single term equal to itself (says
  nothing).
\item[\code{.apply(variables)}]
  Given any mapping from variable index to a value/expression supporting
  \code{+} and \code{==} (e.g.\ a \code{dict} of CVXPY scalars, or
  \code{CvxpyModel.as\_dict()}), returns the concrete expression
  \code{sum(variables[i] for i in lhs) == variables[rhs]}.
\item[\code{iter(ct)}]
  Yields \code{(lhs, rhs)}, so \code{lhs, rhs = ct} unpacks a constraint.
\item[\code{ct1 == ct2}, \code{hash(ct)}]
  Two constraints compare and hash equal iff their \code{.key} match, so they
  can be deduplicated via a \code{set}.
\item[\code{normalisation\_constraints(matrix, povm, *, dedupe=True)}]
  \code{matrix}: a built \code{MomentMatrix}. \code{povm}: the labels of one
  measurement's outcomes, which sum to the identity. \code{dedupe}: drop
  repeated/trivial constraints (default \code{True}). For every position in
  every known monomial holding one of \code{povm}'s labels, emits ``sum over
  outcomes at that position $=$ the monomial with that position deleted'';
  scans every monomial of every equivalence class (not only each class's
  representative) and deduplicates by the substitution site, so the result is
  complete without redundant constraints. Raises \code{ValueError} if
  \code{povm} has fewer than two labels. Returns a \code{list} of
  \code{LinearConstraint}.
\item[\code{marginal\_constraints(matrix, joint, marginal, *, dedupe=True)}]
  The joint-measurability counterpart: \code{joint} is the label list of a
  parent POVM's outcomes, and \code{marginal} is the single operator they
  marginalise onto. Same \code{dedupe} behaviour and return type as
  \code{normalisation\_constraints}; raises \code{ValueError} if \code{joint}
  has fewer than two labels.
\end{description}
\code{LinearConstraint} also implements a readable \code{repr()}, e.g.\
\code{LinearConstraint(sum[3, 4] == 7)}.
 
\subsection{CVXPY integration --- \texttt{MoMPy.cvxpy\_tools}}
\label{app:cvxpy}
 
\begin{description}[leftmargin=1.4em,style=nextline]
\item[\code{to\_cvxpy(matrix, *, dim=None, psd=True, complex=None, normalise\_identity=False, name="g")}]
  Builds a \code{CvxpyModel} from a built \code{MomentMatrix}: one function
  for both scalar and block moment matrices, reading which one applies from
  \code{matrix.dim}. \code{dim}: override the block size instead of using
  \code{matrix.dim}; rarely needed, since ordinarily the size wanted is
  exactly the one declared when \code{matrix} was built. \code{psd}: include
  the symmetrised $G+G^{\mathsf H}\succeq0$ among \code{model.constraints} ---
  always this form, scalar or block, whether or not the matrix was built with
  \code{hermitian=True}: it equals the old bare $G\succeq0$ whenever $G$ is
  already Hermitian, since $G+G^{\mathsf H}=2G$ there, and otherwise it is the
  same convexification a \code{hermitian=False} matrix has always used,
  generalised from numbers to $\code{dim}\times\code{dim}$ blocks
  (App.~\ref{app:problem} and \code{MIGRATION.md} give the full equivalence
  argument). \code{complex}: allocate a complex CVXPY variable; defaults to
  \code{dim > 1} (a block $uv^\dagger$ need not be real even for real
  operators $u,v$) and to \code{False} for \code{dim == 1}.
  \code{normalise\_identity}: include $\Tr(\id)=1$ (\code{dim==1}) or
  $\mathrm{block}(\id)=\mathds{1}_d$ (\code{dim}$\,>1$). \textbf{Off by
  default, and deliberately so} --- in a tracial relaxation $\Tr(\id)$ is the
  Hilbert-space dimension, not $1$; pinning it is only correct under the
  state-vector NPA convention, or, in the block case, when nothing else
  already pins row/column $0$ (e.g.\ to a branch weight). Set it explicitly
  (or add the constraint by hand) only when that convention is what is
  meant. \code{name}: the underlying CVXPY \code{Variable}'s name (cosmetic).
  Internally allocates a single flat CVXPY \code{Variable} and gathers it
  into the $(n,n)$ or $(nd,nd)$ matrix with one constant indexing operation,
  rather than one CVXPY object per entry, so building the model stays cheap
  even at a few hundred rows (Sec.~\ref{sec:tutorial}). Raises
  \code{ImportError} with an actionable message if CVXPY is not installed.
\item[\code{CvxpyModel}]
  The object \code{to\_cvxpy} returns; not normally constructed directly.
  Covers both cases: which one applies, and hence what indexing returns, is
  fixed at construction time from \code{matrix.dim}.
\item[\code{.G}]
  Attribute: the moment matrix as a single CVXPY expression, already
  respecting every operator equivalence: shape $(n,n)$ for \code{dim==1},
  $(nd,nd)$ otherwise.
\item[\code{.vector}]
  Attribute: the underlying CVXPY \code{Variable}. For \code{dim==1}, one
  entry per class; for \code{dim}$\,>1$, each class occupies a contiguous
  run of $d^2$ entries, reshaped to a $d\times d$ block wherever it is
  indexed.
\item[\code{.constraints}]
  Attribute: the structural constraints required by the relaxation itself
  (the symmetrised PSD requirement on \code{.G} and/or the zero class pinned
  to the zero matrix).
\item[\code{model[key]}]
  Dual behaviour: an integer key indexes \code{.vector} directly
  (\code{model[7]}); any other key (e.g.\ a list of labels) is resolved via the
  underlying matrix's \code{.index\_of} first. So \code{model[[R[0], M[1][0]]]}
  and \code{model[7]} both return the same kind of expression --- a scalar
  for \code{dim==1}, a $d\times d$ expression otherwise.
\item[\code{.variable(index)}]
  The expression (scalar, or $d\times d$) for a given SDP variable index;
  equivalent to \code{model[index]} but explicit.
\item[\code{.identity}]
  Property: the expression for the identity class, $\Tr(\id)$ or
  $\mathds{1}_d$.
\item[\code{.as\_dict()}]
  \code{\{variable index: expression\}} over every variable present --- the
  shape \code{LinearConstraint.apply} expects.
\item[\code{.apply(constraints)}]
  Maps an iterable of \code{LinearConstraint} objects into concrete CVXPY
  equalities, via \code{.as\_dict()} and each constraint's own \code{.apply}
  --- unchanged for both cases, since a \code{LinearConstraint} only ever
  combines its operands with \code{+} and \code{==}, equally meaningful
  between $d\times d$ expressions as between scalars.
\end{description}
 
\subsection{Rewriting-engine internals --- \texttt{MoMPy.equivalence}}
\label{app:internals}
 
These names are the machinery described in Sec.~\ref{sec:algorithm}. Ordinary
use of \MoMPy{} never needs them directly --- \code{Problem.build()} drives
them internally --- but they are public and documented here in full for
anyone extending the rewriting system or writing a new problem class.
 
\begin{description}[leftmargin=1.4em,style=nextline]
\item[\code{ZERO\_CLASS}]
  The disjoint-set id (\code{0}) reserved for the class of words that are
  identically zero; a sticky root (see \code{.union} below).
\item[\code{UnionFind()}]
  A disjoint-set forest with path halving and union by rank; \code{ZERO\_CLASS}
  is created as the initial root.
\item[\code{.new()}]
  Allocate and return a fresh singleton class id.
\item[\code{.find(x)}]
  The current representative (root) of \code{x}'s class, path-halving as it
  goes.
\item[\code{.union(a, b)}]
  Merge the classes of \code{a} and \code{b}; returns the surviving root.
  \code{ZERO\_CLASS} always wins regardless of rank, which is how a zero
  identification propagates through everything later merged into it, even
  merges discovered after the fact.
\item[\code{len(uf)}]
  Number of classes ever allocated (including ones since merged away).
\item[\code{Rewriter(algebra, *, tracial=True, hermitian=True)}]
  Generates the one-step moves available on a word (Sec.~\ref{sec:algorithm}).
  \code{tracial}: cyclic rotations are equivalent and the first/last letters
  count as adjacent; set \code{False} for block moment matrices.
  \code{hermitian}: a word is identified with its reversal.
\item[\code{.is\_zero(w)}]
  \code{True} iff two letters adjacent in \code{w} (including the wrap-around
  pair when tracial) are distinct members of one declared orthogonal set.
\item[\code{.neighbours(w)}]
  Generator yielding every word reachable from \code{w} by exactly one move: a
  single rotation (tracial mode; sufficient to expose the wrapped pair without
  enumerating every rotation), the reversal (hermitian mode), an idempotent
  collapse of an adjacent equal pair, or a swap of an adjacent commuting pair.
\item[\code{.reduce(w)}]
  Repeatedly collapses adjacent idempotent-equal pairs until none remain: a
  cheap, non-canonical pre-normalisation, not a substitute for the full search
  in \code{Classifier}.
\item[\code{Classifier(rewriter)}]
  Assigns words to equivalence classes, memoising as it goes; the public entry
  point is \code{.classify}.
\item[\code{.classify(word)}]
  Returns the (possibly not-yet-final) class id of \code{word}, exploring it by
  breadth-first search if new, and merging with every already-classified word
  the search touches.
\item[\code{.resolve(class\_id)}]
  Maps a possibly-stale class id to its current union-find representative;
  always call this before comparing two class ids for equality.
\item[\code{.known(word)}, \code{.lookup(word)}]
  \code{True} iff \code{word} has been classified; and the current class of an
  already-seen word (\code{None} if never seen, without triggering a search).
\item[\code{.mark\_zero(word)}]
  Force \code{word}'s whole class to merge into the zero class; returns the
  resulting root.
\item[\code{.classes()}]
  \code{\{root: [words...]\}} for the current, fully-merged state.
\item[\code{.zero\_words()}, \code{.all\_words()}]
  The words in the current zero class; and an iterable over every word ever
  classified.
\item[\code{len(classifier)}, \code{.words\_expanded}]
  Number of distinct words classified so far; and how many were actually
  expanded by the breadth-first search (reported in \code{MomentMatrix.stats}
  as \code{words\_expanded}).
\end{description}
 
\subsection{Numerical block-diagonalisation --- \texttt{MoMPy.blkdiag}}
\label{app:blkdiag}
 
This module is not imported by \code{MoMPy/\_\_init\_\_.py} and so is not part
of \code{from MoMPy import *}; use \code{from MoMPy.blkdiag import blkdiag,
M\_ones} explicitly. It also assumes \code{numpy} has already been imported as
\code{np} in the caller's session rather than importing it itself, so add
\code{import numpy as np} before use.
 
\begin{description}[leftmargin=1.4em,style=nextline]
\item[\code{M\_ones(matrix, element, eps)}]
  \code{matrix}: a square numerical matrix. \code{element}: a numerical value
  to search for. \code{eps}: tolerance; entries within \code{eps} of
  \code{element} count as matches. Returns a same-shape $0/1$ matrix,
  symmetrised, marking every position whose entry in \code{matrix} is
  \code{eps}-close to \code{element}. Used by \code{blkdiag} as the indicator
  matrix for one recurring value.
\item[\code{blkdiag(MomMat, id\_els)}]
  \code{MomMat}: a numerical moment matrix, e.g.\ an explicit numerical
  realisation of a built \code{MomentMatrix}. \code{id\_els}: the distinct
  numerical values whose recurring positions should be block-diagonalised
  together (typically the distinct SDP-variable values). Builds one indicator
  matrix per value via \code{M\_ones}, forms two random self-adjoint
  combinations of them, diagonalises the first to get a candidate eigenbasis,
  and uses the second to sort basis vectors into common blocks by their
  nonzero couplings --- the randomised simultaneous block-diagonalisation used
  for symmetry reduction in semidefinite
  programming~\cite{Rosset2018,IoannouRosset2021}. \MoMPy{} has no automatic
  symmetry-reduction pass of its own: this utility only supplies the
  numerical linear algebra once a symmetry group is already in hand; finding
  the right \code{id\_els} from that group is left to the user. Returns \code{(Trans, ListBlocks, P)}: the
  orthogonal transformation that block-diagonalises, the list of block sizes
  in the transformed basis, and each indicator matrix expressed in that basis.
\end{description}
 
\section{Legacy interface}
\label{app:legacy}
 
Scripts written against \MoMPy~0.x continue to run unchanged, through two
modules that wrap the modern engine: \texttt{MoMPy.MoM} (tracial matrices) and
\texttt{MoMPy.BloM} (block matrices).
 
\begin{lstlisting}
from MoMPy.MoM import MomentMatrix, fmap, normalisation_contraints
[G, map_table, S, eq_indices, Gexp] = MomentMatrix(
    S_1, S_2, S_high, rank_1, orthogonal_projectors, commuting_pairs)
\end{lstlisting}
 
\code{map\_table} is now a \code{list} subclass carrying a hash index, so
\code{map\_table[-1][-1]} and \code{term[0]} behave as before while
\code{fmap} became a single dictionary probe. Note that the current interface fixes several
defects in the legacy 0.x one, two of which change numerical results --- the moment matrix and
the lookup table now always agree, and equivalence classes are closed
completely. Both changes tighten the relaxation. See \code{MIGRATION.md} in the
repository for details.
 
\subsection{\texttt{MoMPy.MoM}: tracial matrices}
 
\begin{description}[leftmargin=1.6em,style=nextline]
\item[\code{MomentMatrix(S\_1, S\_2, higher\_order\_elements, rank\_1\_projectors, orthogonal\_projectors, commuting\_pairs, progress=False)}]
  The 0.x functional constructor (module-level; \emph{not} the same object as
  the \code{MomentMatrix} class of App.~\ref{app:matrix}, though it shares the
  name --- import it as \code{MoMPy.MoM.MomentMatrix}). \code{S\_1}:
  first-order labels (identity added automatically). \code{S\_2}: labels whose
  ordered pairs (all $h,k\in$\code{S\_2}, including repeats) form the
  second-order monomials; \code{[]} to skip. \code{higher\_order\_elements}:
  explicit monomials of any length, each a list of labels.
  \code{rank\_1\_projectors}, \code{orthogonal\_projectors},
  \code{commuting\_pairs}: the relations, in the pre-\code{Algebra} shape.
  \code{progress}: print a progress line while building. Returns the legacy
  5-tuple \code{(Moment\_Matrix, map\_table, S, list\_of\_eq\_indices, Mexp)}.
  Internally builds an \code{Algebra} and a \code{MomentProblem} and delegates
  to the modern engine via \code{.to\_legacy()}.
\item[\code{fmap(table, i)}]
  Maps monomial \code{i} to its SDP variable index in \code{table}. A single
  dict probe against a \code{MapTable}; a linear scan for a hand-built plain
  list of \code{[members, index]} rows. Returns the literal \code{FMAP\_ERROR}
  string (not an exception) on a miss, so \code{if fmap(...) == 'ERROR: ...'}
  guard code keeps working.
\item[\code{normalisation\_contraints(element, list\_identities\_in)}]
  \emph{(Legacy spelling, missing an ``s''.)} \code{element}: a POVM's outcome
  labels. \code{list\_identities\_in}: the map-table-shaped list of
  equivalence groups to search. Returns a list of blocks, each of length
  \code{len(element)+1}: one monomial per outcome substituted into a shared
  position, then the monomial with that position deleted. Scans every word of
  every group and deduplicates by substitution site, so --- unlike the original 0.x
  behaviour --- it returns a complete, non-redundant set regardless of which
  word a class happened to store first.
\item[\code{normalisation\_contraints\_2compatibility(Belement, Melement, list\_identities\_in)}]
  The marginalisation/joint-measurability analogue: \code{Belement} are a
  joint POVM's outcome labels, \code{Melement} the single operator they
  marginalise onto. Returns blocks
  \code{[w\_with\_B0, ..., w\_with\_Bk, w\_with\_M]}. Superseded by
  \code{MoMPy.constraints.marginal\_constraints} (App.~\ref{app:constraints}),
  its corrected, deduplicated successor.
\item[\code{check\_if\_id(element, map\_table, rank\_1\_projectors, commuting\_elements, orthogonal\_projectors)}]
  Checks whether \code{element} already occurs in \code{map\_table}; the
  trailing three arguments are accepted for signature compatibility but not
  used by the lookup. Returns \code{[found, is\_zero, index]}. Superseded by
  \code{MapTable.get(word)}, a direct single-probe replacement.
\item[\code{Permute(v)}]
  Cyclic right-rotation: \code{[a,b,c]} $\to$ \code{[c,a,b]} (mirrors
  $\Tr(ABC)=\Tr(CAB)$).
\item[\code{reverse\_list(lista)}]
  Returns \code{list(reversed(lista))}.
\item[\code{Commute\_new(vec, i, j)}]
  Returns a copy of \code{vec} with positions \code{i} and \code{j} swapped;
  raises \code{IndexError} if either is out of range.
\item[\code{Commute(v, index)}]
  Returns a copy of \code{v} with \code{v[index]} swapped with the
  \emph{next} entry, wrapping cyclically (\code{index} taken mod
  \code{len(v)}). Corrected relative to 0.x, which removed elements
  \emph{by value} and so misbehaved on words with repeated labels; this
  version swaps by index.
\end{description}
 
\subsection{\texttt{MoMPy.BloM}: block matrices}
 
Re-exports \code{fmap}, \code{Permute}, \code{Commute}, \code{Commute\_new},
\code{reverse\_list} and \code{normalisation\_contraints\_2compatibility}
unchanged from \code{MoMPy.MoM} (above), plus:
 
\begin{description}[leftmargin=1.6em,style=nextline]
\item[\code{BlockMatrix(S\_1, S\_2, higher\_order\_elements, rank\_1\_projectors, orthogonal\_projectors, commuting\_pairs, progress=False)}]
  Same five positional arguments and return shape as \code{MoM.MomentMatrix}
  above, but builds a block moment matrix: cyclicity is not imposed, so
  entries are identified only up to the declared commutation/idempotency
  relations. Internally builds the unified \code{MomentProblem} with
  \code{dim=1, cyclicity=False, hermitian=False, dedupe=False}.
\item[\code{block\_normalisation\_contraints(element, list\_identities\_in)}]
  The block-matrix counterpart of \code{MoM.normalisation\_contraints}; the
  implementation is shared verbatim (normalisation does not care whether
  entries are traces or operators), so behaviour and return shape are
  identical to that function's.
\item[\code{check\_if\_id\_BloM(element, map\_table, rank\_1\_projectors=None, commuting\_elements=None, orthogonal\_projectors=None)}]
  Same purpose and return shape as \code{MoM.check\_if\_id}
  (\code{[found, is\_zero, index]}), reimplemented as a table lookup; the
  trailing keyword arguments are accepted and ignored. In 0.x this function
  referenced a \code{commuting\_pairs} name that was never a parameter and
  raised \code{NameError} on every call; this is the corrected version.
\end{description}
\end{sloppypar}
 
\end{document}